\documentstyle[jair,twoside,11pt,theapa]{article}
\input{amssymb.sty}
\sloppy

\jairheading{25}{2006}{315--348}{08/05}{03/06}
\firstpageno{315}


\title{Negotiating Socially Optimal Allocations of Resources}
\author{%
\name Ulle Endriss \email ulle@illc.uva.nl \\
\addr ILLC, University of Amsterdam \\
1018 TV Amsterdam, The Netherlands
\AND
\name Nicolas Maudet \email maudet@lamsade.dauphine.fr \\
\addr LAMSADE, Universit\'e Paris-Dauphine \\
75775 Paris Cedex 16, France
\AND
\name Fariba Sadri \email fs@doc.ic.ac.uk \\
\addr Department of Computing, Imperial College London \\
London SW7 2AZ, UK
\AND
\name Francesca Toni \email ft@doc.ic.ac.uk \\
\addr Department of Computing, Imperial College London \\
London SW7 2AZ, UK}

\ShortHeadings{Negotiating Socially Optimal Allocations of Resources}%
{Endriss, Maudet, Sadri, \& Toni}


\renewcommand{\emptyset}{\{\,\}}
\renewcommand{\min}{\mbox{\textit{min}}}
\renewcommand{\max}{\mbox{\textit{max}}}
\newcommand{\R}{\mathbb{R}}
\newcommand{\ie}{\textit{i.e.}}
\newcommand{\eg}{e.g.}

\newtheorem{definition}{Definition}
\newtheorem{theorem}{Theorem}
\newtheorem{lemma}{Lemma}

\newcommand{\proofbegin}{\noindent\emph{Proof.}}
\newcommand{\proofend}{\mbox{}~\hfill~$\Box$}
\newenvironment{proof}{\proofbegin}{\proofend}
\newcommand{\lefttorightproof}{`$\Rightarrow$':}
\newcommand{\righttoleftproof}{`$\Leftarrow$':}
\newcommand{\afterproof}{\paragraph*{$\!\!\!\!\!\!$}}



\begin{document}
\maketitle

\begin{abstract}\noindent
A multiagent system may be thought of as an artificial society of
autonomous software agents and we can apply concepts borrowed from  
welfare economics and social choice theory to assess the social welfare 
of such an agent society. In this paper, we study an abstract negotiation 
framework where agents can agree on multilateral deals to exchange bundles 
of indivisible resources. We then analyse how these deals affect social 
welfare for different instances of the basic framework and different 
interpretations of the concept of social welfare itself.
In particular, we show how certain classes of deals are both 
sufficient and necessary to guarantee that a socially optimal 
allocation of resources will be reached eventually. 
\end{abstract}

\section{Introduction}

A multiagent system may be thought of as an artificial society of
autonomous software agents. Negotiation over the distribution of 
resources (or tasks) amongst the agents inhabiting such a society
is an important area of research in artificial intelligence and computer 
science \cite{RosenscheinZlotkin1994a,Kraus2001a,ChavezEtAl1997,SandholmChapter1999}. 
A number of variants of this problem have been studied in the 
literature. Here we consider the case of an artificial society of agents 
where, to begin with, each agent holds a bundle of 
indivisible resources to which it assigns a certain utility. Agents may then 
negotiate with each other in order to agree on the redistribution of some of these 
resources to benefit either themselves or the agent society they inhabit.

Rather than being concerned with specific strategies for negotiation, 
we analyse how the redistribution of resources by means of negotiation affects the 
well-being of the agent society as a whole. To this end, we make use of formal tools
for measuring \emph{social welfare} developed in welfare economics and social choice 
theory \cite{Moulin1988a,social-choice-handbook}. In the multiagent systems literature, 
the \emph{utilitarian} interpretation of the concept of social welfare is usually taken 
for granted \cite{RosenscheinZlotkin1994a,SandholmChapter1999,Wooldridge2002a}, \ie\
whatever increases the average welfare of the agents inhabiting a society
is taken to be beneficial for society as well. 
This is not the case in welfare economics, for instance,
where different notions of social welfare are being studied and compared
with each other. Here, the concept of \emph{egalitarian} social welfare takes a
particularly prominent role \cite{Sen1970,Rawls1971a,Moulin1988a,social-choice-handbook}.
In this model, social welfare is tied to the individual welfare of the weakest member of society,
which facilitates the incorporation of a notion of fairness into the resource allocation process.
While the discussion of the respective advantages and drawbacks of different notions 
of social welfare in the social sciences tends to be dominated by ethical considerations,\footnote{%
A famous example is Rawls' \emph{veil of ignorance}, a thought experiment designed to 
establish what constitutes a \emph{just} society \cite{Rawls1971a}.} 
in the context of societies of artificial software agents the choice of a suitable formal 
tool for modelling social welfare boils down to a clear-cut (albeit not necessarily simple) 
technical design decision \cite{EndrissMaudetESAW2003}. 
Indeed, different applications may call for different social criteria.
For instance, for the application studied by \citeA{LemaitreEtAlIJCAI1999}, 
where agents need to agree on the access to an earth observation satellite which has been funded 
jointly by the owners of these agents, it is important that each one of them receives a ``fair'' share
of the common resource. Here, a society governed by egalitarian principles may be the most appropriate.
In an electronic commerce application running on the Internet where agents have little or
no commitments towards each other, on the other hand, egalitarian principles seem of
little relevance. In such a scenario, utilitarian social welfare would provide 
an appropriate reflection of the overall profit generated.
Besides utilitarian and egalitarian social welfare, we are also going to discuss  
notions such as \emph{Pareto} and \emph{Lorenz optimality} \cite{Moulin1988a}, 
as well as \emph{envy-freeness} \cite{BramsTaylor1996a}. 

In this paper, we study the effect that negotiation over resources has on society
for a number of different interpretations of the concept of social welfare.
In particular, we show how certain classes of deals regarding the exchange of resources
allow us to guarantee that a socially optimal allocation of resources will be reached 
eventually. These \emph{convergence results} may be interpreted as the emergence of a particular 
global behaviour (at the level of society) in reaction to local behaviour governed by the 
negotiation strategies of individual agents (which determine the kinds of deals 
agents are prepared to accept). 
The work described here is complementary to the large body of literature
on mechanism design and game-theoretical models of negotiation in multiagent
systems \cite<see e.g.>{RosenscheinZlotkin1994a,Kraus2001a,FatimaEtAlAIJ2004}.
While such work is typically concerned with negotiation at the local level
(how can we design mechanisms that provide an incentive to individual agents
to adopt a certain negotiation strategy?), we address negotiation at a global 
level by analysing how the actions taken by agents locally affect the overall 
system from a social point of view.

As we shall see, truly \emph{multilateral} deals 
involving any number of agents as well as any number of resources may be 
necessary to be able to negotiate socially optimal allocations of resources. 
This is certainly true as long as we use arbitrary utility functions to model
the preferences of individual agents. In some application domains, however, 
where utility functions may be assumed to be subject to certain restrictions
(such as being additive), we are able to obtain stronger results and show
that also structurally simpler classes of deals (in particular, deals involving 
only a single resource at a time) can be sufficient to negotiate socially 
optimal allocations. 
Nevertheless, for other seemingly strong restrictions on agents' utility functions
(such as the restriction to dichotomous preferences) we are able to show that no 
reduction in the structural complexity of negotiation is possible.

Our approach to multiagent resource allocation is of a \emph{distributed} nature.
In general, the allocation procedure used to find a suitable allocation of
resources could be either centralised or distributed.
In the centralised case, a single entity decides on the final
allocation of resources amongst agents, possibly after having elicited
the agents' preferences over alternative allocations. Typical
examples are combinatorial auctions \cite{CramtonEtAl2006}. Here the central 
entity is the auctioneer and the reporting of preferences takes the form of bidding.
In truly distributed approaches, on the other hand, allocations emerge
as the result of a sequence of local negotiation steps. 
Both approaches have their advantages and disadvantages.
Possibly the most important argument in favour of auction-based mechanisms
concerns the simplicity of the communication protocols required to implement
such mechanisms. Another reason for the popularity of centralised 
mechanisms is the recent push in the design of powerful algorithms for 
combinatorial auctions that, for the first time, perform reasonably well in 
practice \cite{FujishimaEtAlIJCAI1999,SandholmAIJ2002}. 
Of course, such techniques are, in principle, also applicable in the distributed
case, but research in this area has not yet reached the same level of maturity 
as for combinatorial auctions. 
An important argument \emph{against} centralised approaches is that it may be difficult 
to find an agent that could assume the role of an ``auctioneer'' (for instance, in
view of its computational capabilities or in view of its trustworthiness).

The line of research pursued in this paper has been inspired 
by Sandholm's work on sufficient and necessary contract (\ie\
deal) types for distributed task allocation \cite{Sandholm1998a}.
Since then, it has been further developed by the present authors, 
their colleagues, and others in the context of resource allocation
problems \cite{BouveretLangIJCAI2005,ChevaleyreEtAlCSDT2004,ChevaleyreEtAlAAMAS2005,%
ChevaleyreEtAlIJCAI2005,DunneJAIR2005,DunneEtAlECAI2004,DunneEtAlAIJ,EndrissMaudetESAW2003,%
EndrissMaudetJAAMAS2005,EndrissEtAlAAMAS2003-optimal,EndrissEtAlMFI2003}. 
In particular, we have extended Sandholm's framework by also addressing negotiation 
systems without compensatory side payments \cite{EndrissEtAlAAMAS2003-optimal}, as well 
as agent societies where the concept of social welfare is given a different interpretation 
to that in the utilitarian programme \cite{EndrissEtAlMFI2003,EndrissMaudetESAW2003}.
The present paper provides a comprehensive overview of the most fundamental
results, mostly on the convergence to an optimal allocation with respect
to different notions of social welfare, in a very active and timely area 
of ongoing research. 

The remainder of this paper is organised as follows.
Section~\ref{sec:prelim} introduces the basic negotiation framework
for resource reallocation we are going to consider. It gives definitions for the 
central notions of \emph{allocation}, \emph{deal}, and \emph{utility}, and
it discusses possible restrictions to the class of admissible deals (both 
structural and in terms of acceptability to individual agents). 
Section~\ref{sec:prelim} also introduces the various concepts of a social
preference we are going to consider in this paper. 
Subsequent sections analyse specific instances of the basic negotiation 
framework (characterised, in particular, by different criteria for the acceptability 
of a proposed deal) with respect to specific notions of social welfare.
In the first instance, agents are assumed to be \emph{rational} (and ``myopic'') 
in the sense of never accepting a deal that would result in a negative payoff. 
Section~\ref{sec:withmoney} analyses the first variant of this model of rational negotiation, 
which  allows for monetary side payments to increase the range of acceptable deals. As we shall see, 
this model facilitates negotiation processes that maximise \emph{utilitarian social welfare}. 
If side payments are not possible, we cannot guarantee outcomes with maximal social welfare,
but it is still possible to negotiate \emph{Pareto optimal} allocations. 
This variant of the rational model is studied in Section~\ref{sec:withoutmoney}. 
Both Section~\ref{sec:withmoney} and~\ref{sec:withoutmoney} also investigate how restrictions 
to the range of utility functions agents may use to model their preferences can 
affect such convergence results. 

In the second part of the paper we apply our methodology to agent societies
where the concept of social welfare is given a different kind of interpretation 
than is commonly the case in the multiagent systems literature. 
Firstly, in Section~\ref{sec:egalitarian} we analyse our framework of resource 
allocation by negotiation in the context of \emph{egalitarian agent societies}.
Then Section~\ref{sec:lorenz} discusses a variant of the framework that combines ideas from
both the utilitarian and the egalitarian programme and enables agents to negotiate
\emph{Lorenz optimal} allocations of resources. 
Finally, Section~\ref{sec:further} introduces the idea of using an \emph{elitist} model of social 
welfare for applications where societies of agents are merely a means of enabling at least one
agent to achieve their goal. This section also reviews the concept of \emph{envy-freeness} and 
discusses ways of measuring different degrees of envy. 

Section~\ref{sec:conclusion} summarises our results and concludes with a 
brief discussion of the concept of \emph{welfare engineering}, \ie\
with the idea of choosing tailor-made definitions of social welfare for
different applications and designing agents' behaviour profiles accordingly.

\section{Preliminaries}
\label{sec:prelim}

The basic scenario of \emph{resource allocation by negotiation} studied 
in this paper is that of an artificial society inhabited by a number of agents, each 
of which initially holds a certain number of resources. These agents will typically 
ascribe different values (utilities) to different bundles of resources. They may then 
engage in negotiation and agree on the reallocation of some of the resources, for 
example, in order to improve their respective individual welfare (\ie\
to increase their utility).
Furthermore, we assume that it is in the interest of the system designer that 
these distributed negotiation processes ---somehow--- also result in a positive 
payoff for society as a whole. 

\subsection{Basic Definitions}\label{sec:basicdefs}

An instance of our abstract negotiation framework consists of a finite set of 
(at least two) \emph{agents} $\cal A$ and a finite set of \emph{resources} $\cal R$. 
Resources are indivisible and non-sharable.
An \emph{allocation of resources} is a partitioning of $\cal R$ amongst the agents in $\cal A$.

\begin{definition}[Allocations]
An allocation of resources is a function~$A$ from $\cal A$ to subsets 
of $\cal R$ such that $A(i)\cap A(j)=\emptyset$ for $i\not=j$ and 
$\bigcup_{i\in{\cal A}}A(i)={\cal R}$.
\end{definition}
For example, given an allocation $A$ with $A(i)=\{r_3,r_7\}$,
agent $i$ would own resources $r_3$ and $r_7$. 
Given a particular allocation of resources, agents may agree on a 
(multilateral) \emph{deal} to exchange some of the resources they 
currently hold. In the most general case, 
any numbers of agents and resources could be involved in a single deal.
From an abstract point of view, a deal takes us from one allocation of resources to 
the next. That is, we may characterise a deal as a pair of allocations.

\begin{definition}[Deals]
\label{def:deals}
A deal is a pair $\delta=(A,A')$ where $A$ and $A'$ are allocations 
of resources with $A\not=A'$.
\end{definition}
The set of agents \emph{involved in a deal} $\delta=(A,A')$ is given by
${\cal A}^\delta = \{i\in{\cal A}\;|\,A(i)\not=A'(i)\}$.
The \emph{composition} of two deals is defined as follows:
If $\delta_1=(A,A')$ and $\delta_2=(A',A'')$, then $\delta_1 \circ \delta_2 = (A,A'')$.
If a given deal is the composition of two deals that concern disjoint sets of agents,
then that deal is said to be \emph{independently decomposable}. 

\begin{definition}[Independently decomposable deals]\label{def:id}
A deal $\delta$ is called independently decomposable iff
there exist deals $\delta_1$ and $\delta_2$ such that
$\delta=\delta_1\circ\delta_2$ and
${\cal A}^{\delta_1} \cap {\cal A}^{\delta_2} = \emptyset$.
\end{definition}
Observe that if
$\delta=(A,A')$ is independently decomposable then there exists an intermediate 
allocation $B$ different from both $A$ and $A'$ such that the intersection 
of $\{i\in{\cal A}\,|\,A(i)\not=B(i)\}$ and $\{i\in{\cal A}\,|\,B(i)\not=A'(i)\}$
is empty, \ie\ such that the union of 
$\{i\in{\cal A}\,|\,A(i)=B(i)\}$ and $\{i\in{\cal A}\,|\,B(i)=A'(i)\}$ 
is the full set of agents ${\cal A}$.
Hence, $\delta=(A,A')$ \emph{not} being independently decomposable implies
that there exists no allocation $B$ different from both $A$ and $A'$ 
such that either $B(i)=A(i)$ or $B(i)=A'(i)$ for all agents $i\in{\cal A}$
(we are going to use this fact in the proofs of our ``necessity theorems'' later on).

The value an agent $i\in{\cal A}$ ascribes to a particular set of resources $R$
will be modelled by means of a \emph{utility function}, that is, a function 
from sets of resources to real numbers. We are going to consider both 
general utility functions (without any restrictions) and several more specific
classes of functions.

\begin{definition}[Utility functions]
Every agent $i\in{\cal A}$ is equipped with a utility function 
$u_i:2^{\cal R}\to\R$. We are going to consider the following 
restricted classes of utility functions:
\begin{itemize}
\item $u_i$ is non-negative iff $u_i(R)\geq 0$ for all $R\subseteq{\cal R}$.
\item $u_i$ is positive iff it is non-negative and $u_i(R)\not=0$ for all 
$R\subseteq{\cal R}$ with $R\not=\emptyset$.
\item $u_i$ is monotonic iff $R_1\subseteq R_2$ implies 
$u_i(R_1)\leq u_i(R_2)$ for all $R_1, R_2\subseteq {\cal R}$.
\item $u_i$ is additive iff $u_i(R)=\sum_{r\in R}u_i(\{r\})$ for all $R\subseteq{\cal R}$.
\item $u_i$ is a 0-1 function iff it is additive and
$u_i(\{r\})=0$ or $u_i(\{r\})=1$ for all $r\in{\cal R}$.
\item $u_i$ is dichotomous iff $u_i(R)=0$ or $u_i(R)=1$ for all $R\subseteq{\cal R}$.
\end{itemize}
\end{definition}
Recall that, given an allocation $A$, the set $A(i)$ is 
the bundle of resources held by agent $i$ in that situation.
We are usually going to abbreviate $u_i(A) = u_i(A(i))$ for 
the utility value assigned by agent $i$ to that bundle.

\subsection{Deal Types and Rationality Criteria}
\label{sec:dealclasses}

In this paper we investigate what kinds of negotiation outcomes agents can
achieve by using different \emph{classes of deals}. A class of deals 
may be characterised by both \emph{structural} constraints (number of
agents and resources involved, etc.) and \emph{rationality} constraints
(relating to the changes in utility experienced by the agents involved).

Following \citeA{Sandholm1998a}, we can distinguish a number of structurally 
different types of deals. The most basic are \emph{1-deals}, 
where a single item is passed from one agent to another.

\begin{definition}[1-deals]
A 1-deal is a deal involving the reallocation of exactly one resource.
\end{definition}
This corresponds to the ``classical'' form of a contract typically found 
in the Contract Net protocol \cite{Smith1980a}.
Deals where one agent passes a set of resources on to another agent are called 
\emph{cluster deals}. Deals where one agent gives a single item to another agent 
who returns another single item are called \emph{swap deals}.
Sometimes it can also be necessary to exchange resources between more than just 
two agents. In Sandholm's terminology, a \emph{multiagent deal} is a deal that
could involve any number of agents, where each agent passes at most one resource
to each of the other agents taking part. Finally, deals that combine the features 
of the cluster and the multiagent deal type are called
\emph{combined deals} by Sandholm. These could involve any number of agents 
and any number of resources.
Therefore, \emph{every} deal $\delta$, in the sense of Definition~\ref{def:deals},
is a combined deal. In the remainder of this paper, when speaking about deals without
further specifying their type, we are always going to refer to combined deals (without
any structural restrictions).\footnote{%
The ontology of deal types discussed here is, of course, not exhaustive.
It would, for instance, also be of interest to consider the class of bilateral deals
(involving exactly two agents but any number of items).}

An agent may or may not find a particular deal $\delta$ acceptable.
Whether or not an agent will accept a given deal depends on the 
\emph{rationality criterion} it applies when evaluating deals.
A selfish agent~$i$ may, for instance, only accept deals $\delta=(A,A')$
that strictly improve its personal welfare: $u_i(A)<u_i(A')$.
We call criteria such as this, which only depend on the utilities 
of the agent in question, \emph{personal} rationality criteria.
While we do not want to admit arbitrary rationality criteria, the
classes of deals that can be characterised using personal rationality 
criteria alone is somewhat too narrow for our purposes. 
Instead, we are going to consider rationality criteria that are 
\emph{local} in the sense of only depending on the utility levels
of the agents involved in the deal concerned.

\begin{definition}[Local rationality criteria]\label{def:locality}
A class $\Delta$ of deals is said to be characterised by a local 
rationality criterion iff it is possible to define a predicate $\Phi$ 
over $2^{{\cal A}\times\R\times\R}$ such that a deal $\delta=(A,A')$ belongs to $\Delta$ 
iff $\Phi(\{(i,u_i(A),u_i(A'))\;|\;i\in{\cal A}^\delta\})$ holds true.
\end{definition}
That is, $\Phi$ is mapping a set of triples of one agent name and two reals (utilities) 
each to truth values. The locality aspect comes in by only applying $\Phi$ to 
the set of triples for those agents whose bundle changes with $\delta$.
Therefore, for instance, the class of all deals that increase the utility
of the previously poorest agent is not characterisable by a local rationality
criterion (because this condition can only be checked by inspecting the
utilities of all the agents in the system). 

\subsection{Socially Optimal Allocations of Resources}
\label{sec:swos}

As already mentioned in the introduction,
we may think of a multiagent system as a \emph{society} of autonomous software agents.
While agents make their \emph{local} decisions on what deals to propose and to
accept, we can also analyse the system from a \emph{global} or \emph{societal} 
point of view and may thus prefer certain allocations of resources over others.
To this end, welfare economics provides formal tools to assess how the 
distribution of resources amongst the members of a society affects the well-being 
of society as a whole \cite{Sen1970,Moulin1988a,social-choice-handbook}.

Given the preference profiles of the individual agents in a society (which, 
in our framework, are represented by means of their utility functions), a 
\emph{social welfare ordering} over alternative allocations of resources
formalises the notion of a society's preferences.
Next we are going to formally introduce the most important social welfare orderings
considered in this paper (some additional concepts of social welfare are discussed
towards the end of the paper). In some cases, social welfare is best defined in 
terms of a collective utility function.
One such example is the notion of \emph{utilitarian} social welfare.                         

\begin{definition}[Utilitarian social welfare]
\label{def:swu}
The utilitarian social welfare $sw_u(A)$ of an allocation of resources $A$ 
is defined as follows:
\[\begin{array}{rcl}
sw_u(A) & = & {\displaystyle\sum_{i\in{\cal A}}u_i(A)}
\end{array}\]
\end{definition}
Observe that maximising the collective utility function $sw_u$ amounts to maximising 
the \emph{average utility} enjoyed by the agents in the system.
Asking for maximal utilitarian social welfare is a very strong requirement. 
A somewhat weaker concept is that of \emph{Pareto optimality}.
An allocation is Pareto optimal iff there is no other allocation
with higher utilitarian social welfare that would be no worse for any of
of the agents in the system (\ie\ 
that would be strictly better for at least one agent without being worse for
any of the others).
\begin{definition}[Pareto optimality]\label{def:pareto}
An allocation $A$ is called Pareto optimal iff there is no 
allocation $A'$ such that $sw_u(A)<sw_u(A')$ and $u_i(A)\leq u_i(A')$ for 
all $i\in{\cal A}$. 
\end{definition}
The first goal of an \emph{egalitarian} society should be to increase
the welfare of its weakest member \cite{Rawls1971a,Sen1970}.
In other words, we can measure the social welfare of such a society by
measuring the welfare of the agent that is currently worst off.

\begin{definition}[Egalitarian social welfare]\label{def:swe}
The egalitarian social welfare $sw_e(A)$ of an allocation of resources $A$ is
defined as follows:
\[\begin{array}{rcl}
sw_e(A) & = & \min\{u_i(A)\;|\;i\in{\cal A}\}
\end{array}\]
\end{definition}
The egalitarian collective utility function $sw_e$ gives rise to a social 
welfare ordering over alternative allocations of resources: $A'$ is strictly 
preferred over $A$ iff $sw_e(A)<sw_e(A')$. 
This ordering is sometimes called the \emph{maximin-ordering}.
The maximin-ordering only takes into account the welfare
of the currently weakest agent, but is insensitive to utility fluctuations in the
rest of society. To allow for a finer distinction of the social welfare of different
allocations we introduce the so-called \emph{leximin-ordering}.

For a society with $n$ agents, let $\{u_1,\ldots,u_n\}$ be the set
of utility functions for that society. Then  every allocation $A$
determines a utility vector $\langle u_1(A),\ldots,u_n(A)\rangle$
of length $n$. If we rearrange the elements of that vector in
increasing order we obtain the \emph{ordered utility vector} for
allocation $A$, which we are going to denote by $\vec{u}(A)$.
The number $\vec{u}_i(A)$ is the $i$th element in such a vector
(for $1\leq i\leq |{\cal A}|$). That is, $\vec{u}_1(A)$ for instance, is 
the utility value assigned to allocation $A$ by the currently weakest agent.
We now declare a \emph{lexicographic ordering} over vectors 
of real numbers (such as $\vec{u}(A)$) in the usual
way: $\vec{x}$ lexicographically precedes $\vec{y}$ iff
$\vec{x}$ is a (proper) prefix of $\vec{y}$ or $\vec{x}$
and  $\vec{y}$ share a common (proper) prefix of length $k$
(which may be $0$) and we have $\vec{x}_{k+1}<\vec{y}_{k+1}$.

\begin{definition}[Leximin-ordering]\label{def:leximin}
The leximin-ordering $\prec$ over alternative allocations of resources 
is defined as follows: 
\[\begin{array}{rcl}
A\prec A' & \mbox{ iff }\, & 
\mbox{$\vec{u}(A)$ lexicographically precedes $\vec{u}(A')$}
\end{array}\]
\end{definition}
We write $A\preceq A'$ iff either $A\prec A'$ or $\vec{u}(A)=\vec{u}(A')$. 
An allocation of resources $A$ is called \emph{leximin-maximal} iff there is no
other allocation $A'$ such that $A\prec A'$.


Finally, we introduce the concept of \emph{Lorenz domination},
a social welfare ordering that combines utilitarian
and egalitarian aspects of social welfare. The basic idea is to endorse
deals that result in an improvement with respect to utilitarian welfare
without causing a loss in egalitarian welfare, and vice versa.

\begin{definition}[Lorenz domination]\label{def:lorenz}
Let $A$ and $A'$ be allocations for a society with $n$ agents.
Then $A$ is Lorenz dominated by $A'$ iff 
\[\begin{array}{rcl}
\displaystyle\sum_{i=1}^k \vec{u}_i(A) & \leq & 
\displaystyle\sum_{i=1}^k \vec{u}_i(A')
\end{array}\]
for all $k$ with $1\leq k\leq n$ and, furthermore, that inequality is strict
for at least one $k$.
\end{definition}
For any $k$ with $1\leq k\leq n$, the sum referred to in the above definition
is the sum of the utility values assigned to the respective allocation of
resources by the $k$ weakest agents.
For $k=1$, this sum is equivalent to the egalitarian social welfare for that
allocation. For $k=n$, it is equivalent to the utilitarian social welfare.
An allocation of resources is called \emph{Lorenz optimal} iff it is not
Lorenz dominated by any other allocation. 
 
We illustrate some of the above social welfare concepts (and the use of ordered utility vectors)
by means of an example. Consider a society with three agents and two resources,
with the agents' utility functions given by the following table:

\begin{center}
\begin{tabular}{lll} \hline
$u_1(\emptyset) = 0$   & $u_2(\emptyset) = 0$   & $u_3(\emptyset) = 0$ \\
$u_1(\{r_1\}) = 5$     & $u_2(\{r_1\}) = 4$     & $u_3(\{r_1\}) = 2$ \\ 
$u_1(\{r_2\}) = 3$     & $u_2(\{r_2\}) = 2$     & $u_3(\{r_2\}) = 6$ \\ 
$u_1(\{r_1,r_2\}) = 8$ & $u_2(\{r_1,r_2\}) = 17$ & $u_3(\{r_1,r_2\}) = 7$\\ 
\hline
\end{tabular}
\end{center} 
First of all, we observe that the egalitarian social welfare will be $0$ for any
possible allocation in this scenario, because at least one of the agents
would not get any resources at all. Let $A$ be the allocation where 
agent~2 holds the full bundle of resources. Observe that this is the 
allocation with maximal utilitarian social welfare. The corresponding utility
vector is $\langle 0,17,0\rangle$, \ie\ $\vec{u}(A)=\langle 0,0, 17\rangle$.
Furthermore, let $A'$ be the allocation where agent~1 gets $r_1$, agent~2 gets 
$r_2$, and agent~3 has to be content with the empty bundle. Now we get
an ordered utility vector of $\langle 0,2,5\rangle$.
The initial element in either vector is $0$, but $0<2$, \ie\
$\vec{u}(A)$ lexicographically precedes $\vec{u}(A')$.
Hence, we get $A\prec A'$, \ie\
$A'$ would be the socially preferred allocation 
with respect to the leximin-ordering.
Furthermore, both $A$ and $A'$ are Pareto optimal and neither
is Lorenz-dominated by the other.
Starting from allocation $A'$, agents~1 and~2 swapping their respective bundles would result 
in an allocation with the ordered utility vector $\langle 0,3,4\rangle$, \ie\
this move would result in a Lorenz improvement.

\section{Rational Negotiation with Side Payments}
\label{sec:withmoney}

In this section, we are going to discuss a first instance of the general
framework of resource allocation by negotiation set out earlier. 
This particular variant, which we shall refer to as the model of
\emph{rational negotiation with side payments} (or simply \emph{with money}), 
is equivalent to a framework put forward by Sandholm where agents negotiate in 
order to reallocate \emph{tasks} \cite{Sandholm1998a}. 
For this variant of the framework, our aim will be to negotiate allocations
with \emph{maximal utilitarian social welfare}.

\subsection{Individual Rationality}

In this instance of our negotiation framework,
a deal may be accompanied by a number of monetary \emph{side payments} 
to compensate some of the agents involved for accepting a loss in utility. 
Rather than specifying for each pair of agents how much money the former is
supposed to pay to the latter, we simply say how much money each agent either 
pays out or receives. This can be modelled by using what we call a 
\emph{payment function}.

\begin{definition}[Payment functions]
A payment function is a function $p$ from $\cal A$ 
to real numbers satisfying the following condition:
\[\begin{array}{rcl}
\displaystyle\sum_{i\in{\cal A}}p(i) & = & 0
\end{array}\] 
\end{definition}
Here, $p(i)>0$ means that agent~$i$ \emph{pays} the amount of $p(i)$, while 
$p(i)<0$ means that it \emph{receives} the amount of $-p(i)$. By definition
of a payment function, the sum of all payments is $0$, \ie\
the overall amount of money present in the system does not change.\footnote{%
As the overall amount of money present in the system stays constant throughout the 
negotiation process, it makes sense not to take it into account for the evaluation 
of social welfare.}

In the rational negotiation model, agents are self-interested in the sense of
only proposing or accepting deals that strictly increase their own welfare
\cite<for a justification of this approach we refer to>{Sandholm1998a}. 
This ``myopic'' notion of \emph{individual rationality} may be formalised as follows.

\begin{definition}[Individual rationality]
\label{def:individualrationality}
A deal $\delta=(A,A')$ is called individually rational iff there exists a payment 
function $p$ such that $u_i(A')-u_i(A)>p(i)$ for all $i\in{\cal A}$, 
except possibly $p(i)=0$ for agents $i$ with $A(i)=A'(i)$.
\end{definition}
That is, agent~$i$ will be prepared to accept the deal $\delta$ iff it has to pay 
less than its gain in utility or it will get paid more than its loss in utility,
respectively. Only for agents $i$ not affected by the deal, \ie\
in case  $A(i)=A'(i)$, there may be no payment at all.
For example, if $u_i(A)=8$ and $u_i(A')=5$, then the utility of agent~$i$
would be reduced by $3$ units if it were to accept the deal $\delta=(A,A')$.
Agent~$i$ will only agree to this deal if it is accompanied by a side payment 
of more than $3$ units; that is, if the payment function $p$ satisfies
$-3>p(i)$.

For any given deal, there will usually be a range of possible side payments. 
How agents manage to agree on a particular one is not a matter of 
consideration at the abstract level at which we are discussing this 
framework here. We assume that a deal will go ahead as long as there 
exists \emph{some} suitable payment function $p$.
We should point out that this assumption may not be justified under
all circumstances. For instance, if utility functions are not publicly 
known and agents are risk-takers, then a potential deal may not be
identified as such, because some of the agents may understate their  
interest in that deal in order to maximise their expected 
payoff \cite{MyersonSatterthwaite1983}. Therefore, the theoretical results
on the reachability of socially optimal allocations 
reported below will only apply under the assumption that such strategic
considerations will not prevent agents from making mutually beneficial
deals.

\subsection{An Example}\label{sec:examplewithmoney}

As an example, consider a system with two agents, 
agent~1 and agent~2, and a set of two resources ${\cal R}=\{r_1,r_2\}$. The following table
specifies the values of the utility functions $u_1$ and $u_2$ for every subset 
of $\{r_1,r_2\}$:

\begin{center}
\begin{tabular}{ll} \hline
$u_1(\emptyset) = 0$ & $u_2(\emptyset) = 0$ \\
$u_1(\{r_1\}) = 2$ & $u_2(\{r_1\}) = 3$ \\ 
$u_1(\{r_2\}) = 3$ & $u_2(\{r_2\}) = 3$ \\ 
$u_1(\{r_1,r_2\}) = 7$ & $u_2(\{r_1,r_2\}) = 8$ \\ \hline 
\end{tabular}
\end{center}
Also suppose agent~1 initially holds the full set of resources $ \{r_1,r_2\}$
and agent~2 does not own any resources to begin with.

The utilitarian social welfare for this initial allocation is $7$, but it could be $8$, 
namely if agent~2 had both resources. As we are going to see next, the simple class of
\emph{1-deals} alone are not always sufficient to 
guarantee the optimal outcome of a negotiation process (if agents abide to the 
individual rationality criterion for the acceptability of a deal). 
In our example, the only possible 1-deals
would be to pass either $r_1$ or $r_2$ from agent~1 to agent~2. In either case, the
loss in utility incurred by agent~1 ($5$ or $4$, respectively) would outweigh the gain 
of agent~2 ($3$ for either deal), so there is no payment function that would make these
deals individually rational. 
The \emph{cluster deal} of passing $\{r_1,r_2\}$ from agent~1 to~2, on the other hand,
\emph{would} be individually rational if agent~2 paid agent~1 an amount of, say, $7.5$
units.

Similarly to the example above, we can also 
construct scenarios where swap deals or multiagent deals are necessary (\ie\
where cluster deals alone would not be sufficient to guarantee maximal social welfare).
This also follows from Theorem~\ref{thm:necessarywithmoney}, which we are going to present
later on in this section. Several concrete examples are given by \citeA{Sandholm1998a}.

\subsection{Linking Individual Rationality and Social Welfare}

The following result, first stated in this form by \citeA{EndrissEtAlAAMAS2003-optimal}, 
says that a deal (with money) is individually rational iff it increases 
utilitarian social welfare. We are mainly going to use this lemma 
to give a simple proof of Sandholm's main result on sufficient contract 
types \cite{Sandholm1998a}, but it has also found useful applications in its 
own right \cite{DunneEtAlAIJ,DunneJAIR2005,EndrissMaudetJAAMAS2005}.

\begin{lemma}[Individually rational deals and utilitarian social welfare]
\label{lem:rationaldeals}
A deal $\delta=(A,A')$ is individually rational iff $sw_u(A)<sw_u(A')$.
\end{lemma}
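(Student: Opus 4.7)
The plan is to prove both directions by directly manipulating the defining inequality of individual rationality and summing it against the constraint $\sum_i p(i) = 0$.

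For the forward direction, I would assume $\delta=(A,A')$ is individually rational and pick a witnessing payment function $p$. For every agent $i \in \mathcal{A}^\delta$ we have the strict inequality $u_i(A') - u_i(A) > p(i)$, while for every $i \notin \mathcal{A}^\delta$ we have $u_i(A) = u_i(A')$ and $p(i) = 0$, so $u_i(A') - u_i(A) = p(i)$. Since $A \neq A'$ forces $\mathcal{A}^\delta$ to be nonempty, summing over all agents yields a strict inequality $\sum_i (u_i(A') - u_i(A)) > \sum_i p(i) = 0$, which is precisely $sw_u(A) < sw_u(A')$.

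For the converse, I would construct an explicit payment function. Set $\Delta_i := u_i(A') - u_i(A)$, let $S := sw_u(A') - sw_u(A) > 0$, and let $n := |\mathcal{A}^\delta|$ (which is positive because $A \neq A'$). Define
\[
p(i) \;=\; \begin{cases} \Delta_i - S/n & \text{if } i \in \mathcal{A}^\delta \\ 0 & \text{otherwise.} \end{cases}
\]
Since $\Delta_i = 0$ for $i \notin \mathcal{A}^\delta$, a short computation gives $\sum_{i\in\mathcal{A}} p(i) = \sum_{i\in\mathcal{A}^\delta}\Delta_i - n\cdot(S/n) = S - S = 0$, so $p$ is a valid payment function. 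For every $i \in \mathcal{A}^\delta$ we have $p(i) = \Delta_i - S/n < \Delta_i$ because $S/n > 0$, and for every $i \notin \mathcal{A}^\delta$ the side-condition $p(i)=0$ is met. Hence $\delta$ is individually rational.

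The only subtlety worth flagging is the treatment of unaffected agents: Definition~\ref{def:individualrationality} allows the inequality to be weakened to equality only when $A(i)=A'(i)$, so in the forward direction one must separate out the unaffected agents (where the inequality may fail to be strict but $\Delta_i = p(i) = 0$) from the involved ones, and rely on $\mathcal{A}^\delta \neq \emptyset$ to lift a sum of weak and strict inequalities to an overall strict one. Once this accounting is handled, both directions are essentially one line each.
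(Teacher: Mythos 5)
Your proof is correct and follows essentially the same route as the paper's: summing the defining inequalities against $\sum_i p(i)=0$ for the forward direction, and constructing an explicit payment function that subtracts an equal share of the surplus $sw_u(A')-sw_u(A)$ from each agent's utility gain for the converse. The only (harmless) variations are that you split the surplus over the involved agents ${\cal A}^\delta$ rather than over all of ${\cal A}$ as the paper does, and that you spell out more carefully than the paper why the summed inequality is strict (at least one agent in the nonempty set ${\cal A}^\delta$ contributes a strict inequality).
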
 

\begin{proof}
\lefttorightproof\
By definition, $\delta=(A,A')$ is individually rational iff there exists 
a payment function $p$ such that $u_i(A')-u_i(A)>p(i)$ holds for all 
$i\in{\cal A}$, except possibly $p(i)=0$ in case $A(i)=A'(i)$. 
If we add up the inequations for all agents $i\in{\cal A}$ we get:
\[\begin{array}{rcl}
{\displaystyle \sum_{i\in{\cal A}}(u_i(A')-u_i(A))} & > & 
{\displaystyle \sum_{i\in{\cal A}}p(i)}
\end{array}\] 
By definition of a payment function, the righthand side equates to $0$ while,
by definition of utilitarian social welfare, the lefthand side equals 
$sw_u(A')-sw_u(A)$. Hence, we really get $sw_u(A)<sw_u(A')$ as claimed.

\righttoleftproof\
Now let $sw_u(A)<sw_u(A')$. We have to show that $\delta=(A,A')$ is an individually 
rational deal. We are done if we can prove that there exists a payment 
function $p$ such that $u_i(A')-u_i(A)>p(i)$ for all $i\in{\cal A}$.
We define the function $p:{\cal A}\to\R$ as follows:
\[\begin{array}{rcl}
p(i) & = & {\displaystyle u_i(A')-u_i(A)-\frac{sw_u(A')-sw_u(A)}{|{\cal A}|}}
\quad(\mbox{for}\ i\in{\cal A})
\end{array}\]
First, observe that $p$ really \emph{is} a payment function, 
because we get $\sum_{i\in{\cal A}}p(i)=0$. We also get  
$u_i(A')-u_i(A)>p(i)$ for all $i\in{\cal A}$, because we
have $sw_u(A')-sw_u(A')>0$. Hence, $\delta$ must indeed be an 
individually rational deal. 
\end{proof}

\afterproof
Lemma~\ref{lem:rationaldeals} suggests that the 
function $sw_u$ does indeed provide an appropriate measure of social well-being in societies of 
agents that use the notion of individual rationality (as given by Definition~\ref{def:individualrationality}) 
to guide their behaviour during negotiation.
It also shows that individual rationality is indeed a local rationality criterion
in the sense of Definition~\ref{def:locality}.

\subsection{Maximising Utilitarian Social Welfare}

Our next aim is to show that any sequence of deals in the rational negotiation model 
with side payments will converge to an allocation with maximal utilitarian 
social welfare; that is, the class of individually rational deals (as given by
Definition~\ref{def:individualrationality}) is sufficient to guarantee
optimal outcomes for agent societies measuring welfare according to the utilitarian
programme (Definition~\ref{def:swu}). This has originally been shown by \citeA{Sandholm1998a} 
in the context of a framework where rational agents negotiate 
in order to reallocate tasks and where the global aim is to minimise the overall 
costs of carrying out these tasks. 

\begin{theorem}[Maximal utilitarian social welfare]
\label{thm:sufficientwithmoney}
Any sequence of individually rational deals will eventually result in an 
allocation with maximal utilitarian social welfare.
\end{theorem}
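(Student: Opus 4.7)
The plan is to leverage Lemma~\ref{lem:rationaldeals}, which establishes that a deal $\delta=(A,A')$ is individually rational if and only if $sw_u(A)<sw_u(A')$. This reduces the theorem to a simple termination-plus-maximality argument on a finite poset.

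First I would argue \emph{termination}: since the sets $\cal A$ and $\cal R$ are both finite, there are only finitely many allocations, hence only finitely many possible values that $sw_u$ can take. By the lemma, every individually rational deal strictly increases $sw_u$, so any sequence of such deals induces a strictly increasing sequence of real numbers drawn from a finite set, which must be finite. Therefore every sequence of individually rational deals terminates at some allocation $A^\star$.

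Next I would argue \emph{maximality} of the terminal allocation. Suppose for contradiction that $A^\star$ does not have maximal utilitarian social welfare. Then there exists an allocation $A'$ with $sw_u(A^\star) < sw_u(A')$. Applying the `$\Leftarrow$' direction of Lemma~\ref{lem:rationaldeals}, the deal $\delta=(A^\star,A')$ is individually rational, contradicting the fact that the negotiation sequence terminated at $A^\star$. Hence $A^\star$ must indeed maximise $sw_u$.

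I do not expect any serious obstacle, since the heavy lifting has already been done in the lemma; the only subtlety worth flagging is that the theorem requires sequences to \emph{eventually} reach a maximum regardless of which individually rational deals the agents happen to pick, and this is precisely what the combination of ``strictly increasing $sw_u$'' (termination) and ``any suboptimal allocation admits a rational escape deal'' (no premature stopping) delivers. No structural restriction on deal types is invoked, which is exactly why combined deals over arbitrary subsets of agents and resources are implicitly assumed here.
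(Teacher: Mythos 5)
Your proposal is correct and follows essentially the same route as the paper's own proof: termination via finiteness of the allocation space combined with the strict increase of $sw_u$ guaranteed by Lemma~\ref{lem:rationaldeals}, and optimality of the terminal allocation by applying the `$\Leftarrow$' direction of the same lemma to derive a contradiction. No gaps.
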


\begin{proof}
Given that both the set of agents $\cal A$ as well as the set of resources
$\cal R$ 
are required to be finite, there can be only
a finite number of distinct allocations of resources.
Furthermore, by Lemma~\ref{lem:rationaldeals}, any individually 
rational deal will strictly increase utilitarian social welfare. 
Hence, negotiation must terminate after a finite number of deals.
For the sake of contradiction, assume that the terminal 
allocation $A$ does \emph{not} have maximal utilitarian social welfare, \ie\
there exists another allocation $A'$ with $sw_u(A)<sw_u(A')$.
But then, by Lemma~\ref{lem:rationaldeals}, the deal $\delta=(A,A')$ 
would be individually rational and thereby possible, which
contradicts our earlier assumption of $A$ being a terminal allocation.
\end{proof}

\afterproof
At first sight, this result may seem almost trivial. The notion of a multilateral
deal without any structural restrictions is a \emph{very} powerful one. A single 
such deal allows for any number of resources to be moved between any number of agents. 
From this point of view, it is not particularly surprising that we can always reach 
an optimal allocation (even in just a single step!).
Furthermore, \emph{finding} a suitable deal is a very complex task, which may not 
always be viable in practice. 
The crucial point of Theorem~\ref{thm:sufficientwithmoney} is that
\emph{any} sequence of deals will result in an optimal allocation.
That is, whatever deals are agreed on in the early stages of negotiation,
the system will never get stuck in a local optimum and finding an allocation
with maximal social welfare remains an option throughout (provided, of course,
that agents are actually able to identify any deal that is theoretically possible).
Given the restriction to deals that are individually rational for all the
agents involved, social welfare must increase with every single deal.
Therefore, negotiation always pays off, even if it has to stop early due
to computational limitations.
 
The issue of complexity is still an important one.
If the full range of deals is too large to be managed in practice,
it is important to investigate how close we can get to finding an optimal
allocation if we restrict the set of allowed deals to certain simple patterns.
\citeA{AnderssonSandholmICDCS2000}, for instance, have conducted
a number of experiments on the sequencing of certain contract/deal types to reach
the best possible allocations within a limited amount of time.
For a complexity-theoretic analysis of the problem of deciding whether 
it is possible to reach an optimal allocation by means of structurally
simple types of deals (in particular 1-deals), we
refer to recent work by \citeA{DunneEtAlAIJ}.

\subsection{Necessary Deals}

The next theorem improves upon Sandholm's main result
regarding necessary contract types \cite{Sandholm1998a},
by extending it to the cases where either all utility functions are monotonic 
or all utility functions are dichotomous.\footnote{%
In fact, our theorem not only sharpens but also also corrects a mistake in previous expositions 
of this result \cite{Sandholm1998a,EndrissEtAlAAMAS2003-optimal}, where the restriction 
to deals that are not independently decomposable had been omitted.}
Sandholm's original result, translated into our terminology, states that for any system 
(consisting of a set of agents ${\cal A}$ and a set of resources ${\cal R}$) 
and any (not independently decomposable) deal $\delta$ for that system, 
it is possible to construct utility functions and choose an initial allocation 
of resources such that $\delta$ is \emph{necessary} to reach an optimal
allocation, if agents only agree to individually rational deals.
All other findings on the insufficiency of certain types of contracts
reported by \citeA{Sandholm1998a} may be considered corollaries to this.
For instance, the fact that, say, cluster deals alone are not sufficient
to guarantee optimal outcomes follows from this theorem if we take $\delta$
to be any particular swap deal for the system in question.

\begin{theorem}[Necessary deals with side payments]
\label{thm:necessarywithmoney}
Let the sets of agents and resources be fixed.
Then for every deal $\delta$ that is not independently decomposable, there exist utility functions 
and an initial allocation such that any sequence of individually rational deals leading
to an allocation with maximal utilitarian social welfare must
include $\delta$. This continues to be the case even when either all utility functions are 
required to be monotonic or all utility functions are required to be dichotomous.
\end{theorem}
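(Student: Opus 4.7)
The plan is to prove, for any non-independently-decomposable $\delta=(A,A')$, that there exist utilities and an initial allocation forcing $\delta$ into every individually rational sequence that reaches a welfare-maximum. Setting the initial allocation to be $A$ itself, it suffices to construct utilities under which $A'$ is the unique allocation with $sw_u(A')>sw_u(A)$ and every other $B$ satisfies $sw_u(B)<sw_u(A)$. By Lemma~\ref{lem:rationaldeals}, individual rationality is equivalent to a strict increase of $sw_u$, so the only individually rational deal available from $A$ is $(A,A')=\delta$; and since $sw_u$ strictly increases along any rational sequence, $A$ never recurs, so $\delta$ must appear as the first step of every such sequence and hence in every sequence reaching an optimum.

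The construction exploits the observation recorded just after Definition~\ref{def:id}: non-decomposability of $\delta$ means that every allocation $B\notin\{A,A'\}$ features at least one agent $i$ whose bundle $B(i)$ lies outside $\{A(i),A'(i)\}$. For the unrestricted case I would pick $N>n$ and set $u_i(A(i))=N$, $u_i(A'(i))=N+1$, and $u_i(R)=0$ for every other $R$. Then $sw_u(A)=nN$, $sw_u(A')=n(N+1)$, and any $B\notin\{A,A'\}$ has at least one zero summand, giving $sw_u(B)\le(n-1)(N+1)<nN=sw_u(A)$.

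For the dichotomous restriction the gap between $sw_u(A)$ and $sw_u(A')$ shrinks to a single unit, but the same strategy survives. Pick any $i_0\in{\cal A}^\delta$, set $u_{i_0}(R)=1$ iff $R=A'(i_0)$, and for every other agent $j$ set $u_j(R)=1$ iff $R\in\{A(j),A'(j)\}$. These functions are dichotomous, $sw_u(A)=n-1$, and $sw_u(A')=n$. Any $B$ with $sw_u(B)\ge n$ must make every agent ``happy'', hence $B(i_0)=A'(i_0)$ and $B(j)\in\{A(j),A'(j)\}$ for every $j\neq i_0$; together this places $B(i)\in\{A(i),A'(i)\}$ for every $i$, so non-decomposability collapses $B$ to $\{A,A'\}$, and $B(i_0)\neq A(i_0)$ leaves only $B=A'$.

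The monotonic restriction is the main technical obstacle, because monotonicity prohibits an isolated spike at the bundles $A(i)$ and $A'(i)$: any superset of $A'(i)$ automatically collects the top utility, so rival allocations $B$ could in principle match $sw_u(A)$ by having several agents hold proper supersets of their target bundles. I plan to use a two-tier monotonic step construction $u_i(R)=M\,[A(i)\subseteq R]+N\,[A'(i)\subseteq R]$ with $N\gg M$, and then exploit the partition identity $\sum_i|B(i)|=|{\cal R}|$ together with non-decomposability to show that $A'$ is the unique allocation in which every agent simultaneously holds a superset of $A'(i)$. The delicate step is ruling out rival $B$ that recover part of the $A'$-bonus across several agents; here the specific structure of $\delta$, particularly the number of agents $i\in{\cal A}^\delta$ with $A'(i)\not\subseteq A(i)$, has to be used to tune the per-agent weights, possibly making them agent-dependent, so that $sw_u(B)<sw_u(A)$ holds for every rival $B$. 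This bookkeeping is where I expect the real work of the proof to live.
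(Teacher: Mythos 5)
Your overall strategy is exactly the paper's: start negotiation at $A$, engineer utilities so that $A'$ is the unique social optimum and the only allocation with $sw_u$ strictly above $sw_u(A)$, and invoke Lemma~\ref{lem:rationaldeals} to conclude that $\delta=(A,A')$ is the only individually rational deal available. Your unrestricted construction works (and is anyway subsumed by the other two cases), and your dichotomous construction is, up to the naming of $i_0$, literally the one in the paper; the use of non-decomposability to collapse any ``all agents happy'' allocation to $\{A,A'\}$ is the right argument.

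The monotonic case, however, is a genuine gap, and you have correctly diagnosed why your own sketch fails: with $u_i(R)=M\,[A(i)\subseteq R]+N\,[A'(i)\subseteq R]$ a rival allocation $B$ in which $n-1$ agents hold proper supersets of their $A'$-bundles collects $(n-1)N$, which swamps $sw_u(A)$ once $N\gg M$, and no choice of agent-dependent weights obviously repairs this. The paper avoids the superset problem entirely with a different monotone gadget: fix $0<\epsilon<1$ and an agent $j$ with $A(j)\not=A'(j)$, and set $u_i(R)=|R|+\epsilon$ if $R=A'(i)$ or ($R=A(i)$ and $i\not=j$), and $u_i(R)=|R|$ otherwise. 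This is monotonic because adding a resource raises $|R|$ by at least $1>\epsilon$, so the isolated spike at an exact bundle never violates $R_1\subseteq R_2\Rightarrow u_i(R_1)\leq u_i(R_2)$. The base term contributes $\sum_i|B(i)|=|{\cal R}|$ to \emph{every} allocation, so only the $\epsilon$-bonuses discriminate; a bonus is awarded only at the exact bundles $A(i)$ or $A'(i)$, so your superset worry evaporates; and since $\delta$ is not independently decomposable, any $B\notin\{A,A'\}$ has some agent with $B(i)\notin\{A(i),A'(i)\}$, hence collects at most $n-1$ bonuses and satisfies $sw_u(B)\leq|{\cal R}|+(n-1)\epsilon=sw_u(A)<sw_u(A')$. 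With that substitution your proof is complete.
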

 
\begin{proof}
Given a set of agents $\cal A$ and a set of resources $\cal R$,
let $\delta=(A,A')$ with $A\not=A'$ be any deal for this system.
We need to show that there are a collection of utility functions and
an initial allocation such that $\delta$ is necessary to reach an allocation
with maximal social welfare. This would be the case if $A'$ had maximal social
welfare, $A$ had the second highest social welfare, and $A$ were the initial
allocation of resources. 

We first prove the existence of such a collection of functions for
the case where all utility functions are required to be monotonic.
Fix $\epsilon$ such that $0<\epsilon<1$.
As we have $A\not=A'$, there must be an agent $j\in{\cal A}$ such that $A(j)\not=A'(j)$. 
We now define utility functions $u_i$ for agents $i\in{\cal A}$ and sets of resources 
$R\subseteq{\cal R}$ as follows:
\[\begin{array}{rcl}
u_i(R) & = & \left\{\begin{array}{ll}
|R|+\epsilon & \mbox{if}\ R=A'(i)\ \mbox{or}\ (R=A(i)\ \mbox{and}\ i\not=j) \\
|R|          & \mbox{otherwise}
\end{array}\right.
\end{array}\] 
Observe that $u_i$ is a monotonic utility function for every $i\in{\cal A}$.
We get $sw_u(A')=|{\cal R}| + \epsilon\cdot|{\cal A}|$ and $sw_u(A)=sw_u(A')-\epsilon$.
Because $\delta=(A,A')$ is not individually decomposable, there exists
no allocation $B$ different from both $A$ and $A'$ such that either
$B(i)=A(i)$ or $B(i)=A'(i)$ for all agents $i\in{\cal A}$. Hence,
$sw_u(B)\leq sw_u(A)$ for any other allocation $B$.
That is, $A'$ is the (unique) allocation with maximal social welfare and the only allocation with
higher social welfare than $A$. Therefore, if we were to make $A$ the initial
allocation then $\delta=(A,A')$ would be the only deal increasing social
welfare. By Lemma~\ref{lem:rationaldeals}, this means that
$\delta$ would be the only individually rational (and thereby the only possible)
deal. Hence, $\delta$ is indeed necessary to achieve maximal utilitarian 
social welfare.

The proof for the case of dichotomous utility functions is very similar;
we only need to show that a suitable collection of dichotomous utility
functions can be constructed. Again, let $j$ be an agent with $A(j)\not=A'(j)$.
We can use the following collection of functions:
\[\begin{array}{rcl}
u_i(R) & = & \left\{\begin{array}{ll}
1 & \mbox{if}\ R=A'(i)\ \mbox{or}\ (R=A(i)\ \mbox{and}\ i\not=j) \\
0 & \mbox{otherwise}
\end{array}\right.
\end{array}\]
We get $sw_u(A')=|{\cal A}|$, $sw_u(A)=sw_u(A')-1$ and $sw_u(B)\leq sw_u(A)$
for all other allocations $B$. Hence, $A$ is not socially optimal and, with $A$
as the initial allocation, $\delta$ would be the only deal that is individually 
rational.
\end{proof}

\afterproof
We should stress that the set of deals that are not independently decomposable
includes deals involving any number of agents and/or any number of resources.
Hence, by Theorem~\ref{thm:necessarywithmoney}, any negotiation protocol that 
puts restrictions on the structural complexity of deals that may be proposed 
will fail to guarantee optimal outcomes, even when there are no constraints
on either time or computational resources. 
This emphasises the high complexity of our negotiation framework
\cite<see also>{DunneEtAlAIJ,ChevaleyreEtAlCSDT2004,EndrissMaudetJAAMAS2005,DunneJAIR2005}.
The fact that the necessity of (almost) the full range of deals persists, even when all
utility functions are subject to certain restrictions makes this result even more striking.
This is true in particular for the case of dichotomous functions, which are in some sense
the simplest class of utility functions (as they can only distinguish between 
``good'' and ``bad'' bundles).

To see that the restriction to deals that are not independently decomposable
matters, consider a scenario with four agents and two resources. If the deal $\delta$
of moving $r_1$ from agent $1$ to agent $2$, and $r_2$ from agent $3$
to agent $4$ is individually rational, then so will be either one of
the two ``subdeals'' of moving either $r_1$ from agent $1$ to agent $2$ or 
$r_2$ from agent $3$ to agent $4$. Hence, the deal $\delta$ (which is
independently decomposable) cannot be necessary in the sense of 
Theorem~\ref{thm:necessarywithmoney} (with reference to our proof above,
in the case of $\delta$ there \emph{are} allocations $B$ such that either
$B(i)=A(i)$ or $B(i)=A'(i)$ for all agents $i\in{\cal A}$, \ie\ we could get
$sw_u(B)=sw_u(A')$).

\subsection{Additive Scenarios}

Theorem~\ref{thm:necessarywithmoney} is a negative result, because it shows that deals 
of any complexity may be required to guarantee optimal outcomes of negotiation. 
This is partly a consequence of the high degree of generality of our framework.
In Section~\ref{sec:basicdefs}, we have defined utility functions as \emph{arbitrary}
functions from sets of resources to real numbers. For many application domains this may be
unnecessarily general or even inappropriate and we may be able to obtain stronger results
for specific classes of utility functions that are subject to certain restrictions. 
Of course, we have already seen that this is \emph{not} the case for either monotonicity
(possibly the most natural restriction) or dichotomy (possibly the most severe
restriction).

Here, we are going to consider the case of \emph{additive} utility functions, which
are appropriate for domains where combining resources does not result in any synergy 
effects (in the sense of increasing an agent's welfare). We refer to systems where all 
agents have additive utility functions as \emph{additive scenarios}. The following theorem 
shows that for these additive scenarios 1-deals are sufficient to guarantee outcomes
with maximal social welfare.\footnote{This has also been observed by T.~Sandholm
(personal communication, September~2002).}

\begin{theorem}[Additive scenarios]
\label{thm:additive}
In additive scenarios, any sequence of individually rational 
1-deals will eventually result in an allocation 
with maximal utilitarian social welfare.
\end{theorem}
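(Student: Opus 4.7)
My plan is to follow the same two-step template used in Theorem~\ref{thm:sufficientwithmoney}: first argue that any sequence of individually rational 1-deals terminates, then show that a terminal allocation must already maximise utilitarian social welfare. Termination is immediate, exactly as before: by Lemma~\ref{lem:rationaldeals}, each individually rational deal strictly increases $sw_u$, and with $\cal A$ and $\cal R$ finite there are only finitely many allocations, so no sequence can continue indefinitely.

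The substantive step is the second one, and this is where additivity does the work. For a 1-deal $\delta=(A,A')$ that moves a single resource $r$ from agent $i$ to agent $j$, additivity gives $u_i(A')=u_i(A)-u_i(\{r\})$ and $u_j(A')=u_j(A)+u_j(\{r\})$, while all other agents' bundles are unchanged. Hence $sw_u(A')-sw_u(A)=u_j(\{r\})-u_i(\{r\})$, and by Lemma~\ref{lem:rationaldeals} the 1-deal is individually rational iff $u_j(\{r\})>u_i(\{r\})$. So at any terminal allocation $A^\star$, for every resource $r\in{\cal R}$ held by some agent $i$ and every other agent $j\in{\cal A}$ we must have $u_i(\{r\})\geq u_j(\{r\})$, i.e.\ the current owner values $r$ at least as much as any other agent does.

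From this local characterisation I conclude global optimality by one more use of additivity. Writing $\mathit{own}_A(r)$ for the (unique) agent holding $r$ in $A$, additivity lets me rewrite $sw_u(A)=\sum_{r\in{\cal R}}u_{\mathit{own}_A(r)}(\{r\})$, and the trivial upper bound $sw_u(A)\leq\sum_{r\in{\cal R}}\max_{k\in{\cal A}}u_k(\{r\})$ holds for every allocation $A$. The previous paragraph shows that $A^\star$ attains this upper bound term by term, so $sw_u(A^\star)$ is maximal, completing the proof.

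I do not expect a genuine obstacle here: the whole argument amounts to observing that, under additivity, a 1-deal's effect on $sw_u$ decomposes into a single-resource comparison, so local optima with respect to 1-deals coincide with global optima. The only point that deserves a little care is making the rewriting $sw_u(A)=\sum_{r}u_{\mathit{own}_A(r)}(\{r\})$ explicit so that the ``term by term'' maximality argument at the end is unambiguous.
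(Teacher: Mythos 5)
Your proof is correct and follows essentially the same route as the paper: both rely on additivity to decompose $sw_u$ into a per-resource sum and on Lemma~\ref{lem:rationaldeals} to identify individually rational 1-deals with single-resource welfare improvements. The only cosmetic difference is that you argue directly that a terminal allocation attains the upper bound $\sum_{r}\max_{k}u_k(\{r\})$ term by term, whereas the paper derives a contradiction from a hypothetical better allocation $A'$; the underlying observation is identical.
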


\begin{proof}
Termination is shown as for Theorem~\ref{thm:sufficientwithmoney}.
We are going to show that, whenever the current allocation does not have
maximal social welfare, then there is still a possible 1-deal that is individually rational.

In additive domains, the utilitarian social welfare of a given allocation may be computed
by adding up the appropriate utility values for all the single resources in $\cal R$. 
For any allocation $A$,
let $f_A$ be the function mapping each resource $r\in{\cal R}$ to the agent
$i\in{\cal A}$ that holds $r$ in situation $A$ (that is, we have 
$f_A(r)=i$ iff $r\in A(i)$).
The utilitarian social welfare for allocation $A$ is then given by the 
following formula:
\[\begin{array}{rcl}
sw_u(A) & = & {\displaystyle \sum_{r\in{\cal R}}u_{f_A(r)}(\{r\})}
\end{array}\]
Now suppose that negotiation has terminated with allocation $A$ and
there are no more individually rational 1-deals possible.
Furthermore, for the sake of contradiction, assume that
$A$ is \emph{not} an allocation with maximal social welfare, \ie\
there exists another allocation $A'$ with $sw_u(A)<sw_u(A')$.
But then, by the above characterisation of social welfare for additive
scenarios, there must be at least one resource $r\in{\cal R}$
such that $u_{f_A(r)}(\{r\})  <  u_{f_{A'}(r)}(\{r\})$.
That is, the 1-deal $\delta$ of passing $r$ from agent 
$f_A(r)$ on to agent $f_{A'}(r)$ would increase social welfare. Therefore, 
by Lemma~\ref{lem:rationaldeals}, $\delta$ must be an individually rational deal, \ie\
contrary to our earlier assumption, $A$ cannot be a terminal allocation.
Hence, $A$ must be an allocation with maximal utilitarian social welfare.
\end{proof}

\afterproof
We conclude this section by briefly mentioning
two recent results that both extend, in different ways, the result
stated in Theorem~\ref{thm:additive} (a detailed discussion, however, would be 
beyond the scope of the present paper).
The first of these results shows that rational deals involving at most $k$ resources each
are sufficient for convergence to an allocation with maximal social welfare whenever 
all utility functions are \emph{additively separable} with respect to a common
partition of $\cal R$ ---\ie\ synergies across different parts of the partition are not 
possible and overall utility is defined as the sum of utilities for the different sets 
in the partition \cite{Fishburn1970}--- and each set in this partition has at most $k$ 
elements \cite{ChevaleyreEtAlAAMAS2005}.

The second result concerns a \emph{maximality property} of utility functions
with respect to 1-deals. 
\citeA{ChevaleyreEtAlIJCAI2005} show that the class of \emph{modular} utility
functions, which is only slightly more general than the class of additive functions
considered here (namely, it is possible to assign a non-zero utility to the empty 
bundle), is maximal in the sense that for no class of functions strictly including the
class of modular functions it would still be possible to guarantee that agents using 
utility functions from that larger class and negotiating only individually rational 
1-deals will eventually reach an allocation with maximal utilitarian social welfare 
in all cases.

\section{Rational Negotiation without Side Payments}
\label{sec:withoutmoney}

An implicit assumption made in the framework that we have presented so far
is that every agent has got an \emph{unlimited amount of money} available to it 
to be able to pay other agents whenever this is required for a deal that would
increase utilitarian social welfare. Concretely, if $A$ is the initial allocation and $A'$ is
the allocation with maximal utilitarian social welfare, then agent $i$ may require an amount of
money just below the difference $u_i(A')-u_i(A)$ to be able to get through the negotiation
process. In the context of task contracting, for which this framework has been proposed 
originally \cite{Sandholm1998a}, this may be justifiable, but for resource allocation problems
it seems questionable to make assumptions about the unlimited availability of one particular 
resource, namely money.
In this section, we therefore investigate to what extent
the theoretical results discussed in the previous section
persist to apply when we consider negotiation processes \emph{without} monetary
side payments.

\subsection{An Example}

In a scenario without money, that is, if we do not allow for compensatory payments, 
we cannot always guarantee an outcome with maximal utilitarian social welfare. 
To see this, consider the following simple problem for a system with two agents, 
agent~1 and agent~2, and a single resource $r$. The agents' utility functions are 
defined as follows:

\begin{center}
\begin{tabular}{ll} \hline
$u_1(\emptyset) = 0$ & $u_2(\emptyset) = 0$ \\
$u_1(\{r\}) = 4$ & $u_2(\{r\}) = 7$ \\ \hline
\end{tabular}
\end{center}  
Now suppose agent~1 initially owns the resource.
Then passing $r$ from agent~1 to agent~2 would increase utilitarian social welfare by an
amount of $3$. For the framework \emph{with} money, agent~2 could pay agent~1, say,
the amount of $5.5$ units and the deal would be individually rational for both of 
them. Without money (\ie\
if $p\equiv 0$), however, no individually rational deal is possible 
and negotiation must terminate with a non-optimal allocation.

As maximising social welfare is not generally possible, instead we are going 
to investigate whether a \emph{Pareto optimal} outcome (see Definition~\ref{def:pareto})
is possible in the framework without money, and what types of deals are sufficient to 
guarantee this. 

\subsection{Cooperative Rationality}

As will become clear in due course, in order to get the desired convergence result, 
we need to relax the notion of individual rationality a little. For the framework
without money, we also want agents to agree to a deal, if this at least 
maintains their utility (that is, no strict increase is necessary). 
However, we are still going to require at least one agent to strictly increase 
their utility. This could, for instance, be the agent proposing the deal in
question. 
We call deals conforming to this criterion \emph{cooperatively rational}.\footnote{%
Note that ``cooperatively rational'' agents are still essentially rational.
Their willingness to cooperate only extends to cases where they can benefit others
without any loss in utility for themselves.}

\begin{definition}[Cooperative rationality]
\label{def:cooperativerationality}
A deal $\delta=(A,A')$ is called cooperatively rational
iff $u_i(A)\leq u_i(A')$ for all $i\in{\cal A}$ and there is an 
agent $j\in{\cal A}$ such that $u_j(A)<u_j(A')$.
\end{definition}
In analogy to Lemma~\ref{lem:rationaldeals}, we 
still have $sw_u(A)<sw_u(A')$ for any deal $\delta=(A,A')$ that 
is cooperatively rational, but \emph{not} vice versa.
We call the instance of our negotiation framework where all
deals are cooperatively rational (and hence do not include a
monetary component) the model of \emph{rational negotiation 
without side payments}.

\subsection{Ensuring Pareto Optimal Outcomes}

As the next theorem will show, the class of cooperatively rational deals
is sufficient to guarantee a Pareto optimal outcome of money-free negotiation.
It constitutes the analogue to Theorem~\ref{thm:sufficientwithmoney}
for the model of rational negotiation without side payments.

\begin{theorem}[Pareto optimal outcomes]
\label{thm:sufficientwithoutmoney}
Any sequence of cooperatively rational deals will eventually result 
in a Pareto optimal allocation of resources.
\end{theorem}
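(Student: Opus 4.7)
The plan is to mirror the structure of the proof of Theorem~\ref{thm:sufficientwithmoney}: first establish termination, then derive Pareto optimality of the terminal allocation by contradiction. Termination is easy. The set of allocations is finite (since $\cal A$ and $\cal R$ are finite), and I would invoke the observation made just after Definition~\ref{def:cooperativerationality} that every cooperatively rational deal $\delta=(A,A')$ strictly increases utilitarian social welfare, i.e.\ $sw_u(A)<sw_u(A')$. Hence no allocation can be visited twice along a sequence of cooperatively rational deals, so any such sequence must be finite.

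For the second part, let $A$ be a terminal allocation and suppose, for contradiction, that $A$ is not Pareto optimal. By Definition~\ref{def:pareto}, there exists an allocation $A'$ with $u_i(A)\leq u_i(A')$ for all $i\in{\cal A}$ and $sw_u(A)<sw_u(A')$. The strict inequality $sw_u(A)<sw_u(A')$ together with the componentwise inequalities forces at least one agent $j$ to enjoy a strict increase $u_j(A)<u_j(A')$ (otherwise the sums would coincide). Therefore the deal $\delta=(A,A')$ satisfies exactly the conditions of Definition~\ref{def:cooperativerationality} and is cooperatively rational, contradicting the assumption that $A$ is terminal. Hence $A$ must be Pareto optimal.

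There is no real obstacle here; the argument is essentially the same contradiction template used for Theorem~\ref{thm:sufficientwithmoney}, with ``maximal utilitarian social welfare'' replaced by ``Pareto optimal'' and ``individually rational'' by ``cooperatively rational''. The only subtlety worth flagging explicitly in the write-up is the little arithmetic step that promotes the weak inequalities $u_i(A)\leq u_i(A')$ plus $sw_u(A)<sw_u(A')$ into the existence of a strictly improved agent $j$, which is what distinguishes a cooperatively rational deal from a merely Pareto-weakly-improving one.
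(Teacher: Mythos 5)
Your proposal is correct and follows essentially the same argument as the paper's own proof: termination via the strict increase in $sw_u$ under cooperatively rational deals plus finiteness of the allocation space, and then the contradiction showing that a non-Pareto-optimal terminal allocation would admit a further cooperatively rational deal. The subtlety you flag (promoting the weak componentwise inequalities plus the strict sum inequality into a strictly improved agent $j$) is exactly the step the paper makes explicit as well.
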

 
\begin{proof}
Every cooperatively rational deal strictly increases utilitarian social
welfare (this is where we need the condition that at least one agent behaves 
\emph{truly} individually rational for each deal).
Together with the fact that there are only finitely many different
allocations of resources, this implies that any negotiation process will 
eventually terminate.
For the sake of contradiction, assume negotiation ends with allocation $A$,
but $A$ is not Pareto optimal. 
The latter means that there exists another
allocation $A'$ with $sw_u(A)<sw_u(A')$ and $u_i(A)\leq u_i(A')$ for all $i\in{\cal A}$.
If we had  $u_i(A)=u_i(A')$ for all $i\in{\cal A}$, then also  $sw_u(A)=sw_u(A')$; that
is, there must be at least one $j\in{\cal A}$ with $u_j(A)<u_j(A')$. But then
the deal $\delta=(A,A')$ would be cooperatively rational, which contradicts
our assumption of $A$ being a terminal allocation.
\end{proof}

\afterproof
Observe that the proof would not have gone through if deals were required to be strictly
rational (without side payments), as this would necessitate $u_i(A)<u_i(A')$ for all $i\in{\cal A}$.
Cooperative rationality means, for instance, that agents would be prepared to
give away resources to which they assign a utility value of $0$, without expecting anything
in return. In the framework with money, another agent could always offer such an agent an
infinitesimally small amount of money, who would then accept the deal.

Therefore, our proposed weakened notion of rationality seems indeed a very reasonable price
to pay for giving up money.

\subsection{Necessity Result}

As our next result shows, also for the framework without side payments,
deals of any structural complexity may be necessary in order to be able to guarantee an optimal 
outcome of a negotiation.\footnote{This theorem corrects a mistake in the original statement of
the result \cite{EndrissEtAlAAMAS2003-optimal}, where the restriction to deals that are not
independently decomposable had been omitted.} 
Theorem~\ref{thm:necessarywithoutmoney} improves upon previous 
results \cite{EndrissEtAlAAMAS2003-optimal} by showing that this necessity property persists 
also when either all utility functions belong to the class of monotonic functions or all 
utility functions belong to the class of dichotomous functions.
 
\begin{theorem}[Necessary deals without side payments]
\label{thm:necessarywithoutmoney}
Let the sets of agents and resources be fixed.
Then for every deal $\delta$ that is not independently decomposable, there exist utility functions 
and an initial allocation such that any sequence of cooperatively rational deals leading
to a Pareto optimal allocation would have to include $\delta$. 
This continues to be the case even when either all utility functions are 
required to be monotonic or all utility functions are required to be dichotomous.
\end{theorem}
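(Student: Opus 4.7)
The plan is to mirror the proof of Theorem~\ref{thm:necessarywithmoney}. Given a deal $\delta=(A,A')$ that is not independently decomposable, I will construct utility functions $u_i$ and take $A$ itself as the initial allocation, arranging that the only cooperatively rational deal available from $A$ is exactly $\delta$, and that $A'$ is the unique Pareto optimum; then every sequence of cooperatively rational deals starting at $A$ and reaching a Pareto optimum must begin with $\delta$. The crucial new feature (compared with Theorem~\ref{thm:necessarywithmoney}) is that the acceptability criterion is now a weak Pareto improvement rather than a utilitarian one, so we must simultaneously make $(A,A')$ into a Pareto improvement and rule out every other weak Pareto improvement of $A$.

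For the general case (which will also handle the dichotomous case), I would fix any agent $j$ with $A(j)\neq A'(j)$ and define
\[
u_i(R) \;=\; \left\{\begin{array}{ll}
1 & \mbox{if } R=A'(i),\ \mbox{or } R=A(i) \mbox{ and } i\neq j,\\
0 & \mbox{otherwise,}
\end{array}\right.
\]
which is $0$--$1$ valued. A short check gives $u_i(A')=1$ for every $i$, $u_j(A)=0$, and $u_i(A)=1$ for $i\neq j$, so $A'$ weakly Pareto-dominates $A$ with strict inequality at $j$. For uniqueness, suppose $B$ weakly Pareto-dominates $A$: the constraint $u_i(B)\geq u_i(A)=1$ for each $i\neq j$ forces $B(i)\in\{A(i),A'(i)\}$, and the strict improvement must occur at $j$ (the only agent not already at the maximum of $u_i$), forcing $u_j(B)=1$ and hence $B(j)=A'(j)$. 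Thus $B(i)\in\{A(i),A'(i)\}$ for every $i$; by the non-independent-decomposability of $\delta$ we obtain $B\in\{A,A'\}$, and since $B\neq A$ we conclude $B=A'$. Pareto optimality of $A'$ is immediate, because every agent already attains the global maximum of $u_i$ there.

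To handle the monotonic case I would replace exact equality by set inclusion: $u_i(R)=1$ if $A'(i)\subseteq R$, or $A(i)\subseteq R$ and $i\neq j$, and $0$ otherwise, which is plainly monotonic. The agent $j$ must now be chosen so that $A'(j)\not\subseteq A(j)$; such a $j$ exists because the identity $\sum_i|A(i)|=\sum_i|A'(i)|=|{\cal R}|$ precludes every involved agent from strictly losing resources. The Pareto-improvement step goes through as before. The main obstacle is the uniqueness argument in this monotonic version: the constraint $u_i(B)\geq u_i(A)$ now only yields $A(i)\subseteq B(i)$ or $A'(i)\subseteq B(i)$, which is much weaker than membership in $\{A(i),A'(i)\}$, especially when some $A(i)$ is small or empty. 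The intended argument is a chain-propagation from the forced condition $A'(j)\subseteq B(j)$: for each $r\in A'(j)\setminus A(j)$, the previous owner $i^*$ of $r$ in $A$ must satisfy $A(i^*)\not\subseteq B(i^*)$ and so be forced into $A'(i^*)\subseteq B(i^*)$, and one iterates outward through the transfer graph of $\delta$. Non-independent-decomposability ensures this graph is connected; verifying that the chain really reaches every agent in ${\cal A}^\delta$ in the directed sense the argument requires, and thereby pins $B$ down to $A'$, is the technical heart of the proof and may require a more careful choice of $j$ (or a finer utility construction) in degenerate configurations.
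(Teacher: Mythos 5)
Your handling of the general and dichotomous cases is correct and is essentially what the paper does: the paper's proof of Theorem~\ref{thm:necessarywithoutmoney} simply reuses the $0$--$1$ construction from Theorem~\ref{thm:necessarywithmoney}, and your uniqueness argument (forcing $B(i)\in\{A(i),A'(i)\}$ for every agent and then invoking non-independent-decomposability) is a somewhat more explicit version of the paper's one-line remark that every deal other than $\delta$ would decrease utilitarian social welfare.

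The monotonic case, however, contains a genuine gap, and you are right to flag it --- the construction you propose can in fact fail for an admissible choice of $j$. Take ${\cal A}=\{1,2,3\}$ and ${\cal R}=\{r_1,r_2\}$ with $A(1)=\{r_1\}$, $A(2)=\{r_2\}$, $A(3)=\emptyset$ and $A'(1)=\emptyset$, $A'(2)=\{r_1\}$, $A'(3)=\{r_2\}$; this $\delta=(A,A')$ is not independently decomposable. The agent $j=2$ satisfies your requirement $A'(j)\not\subseteq A(j)$, but with this choice the conditions $A'(1)=\emptyset\subseteq R$ and $A(3)=\emptyset\subseteq R$ are vacuous, so $u_1\equiv 1$ and $u_3\equiv 1$, while $u_2(R)=1$ iff $r_1\in R$. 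The deal from $A$ to the allocation $B$ with $B(2)=\{r_1,r_2\}$ and $B(1)=B(3)=\emptyset$ is then cooperatively rational, and $B$ is Pareto optimal (every agent is at its maximum), so a one-deal sequence reaches a Pareto optimum without ever using $\delta$. Choosing $j=3$ happens to repair this particular instance, but you have not shown that a good $j$ always exists, and the ``chain propagation'' you sketch is exactly where the argument breaks down when some of the bundles $A(i)$ or $A'(i)$ are empty or nested. For what it is worth, the paper's own treatment of the monotonic case is equally problematic: it asserts that the functions $u_i(R)=|R|+\epsilon$ for $R=A'(i)$ or ($R=A(i)$ and $i\neq j$) satisfy $u_i(A)\leq u_i(A')$ for all $i$, which fails whenever $|A(i)|>|A'(i)|$ for some $i\neq j$ (already for any 1-deal), so that $\delta$ is not even cooperatively rational under that construction. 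Only the dichotomous construction actually goes through; a complete proof of the monotonic claim needs a new construction with the uniqueness step worked out in full, and your sketch does not yet supply it.
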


\begin{proof}
The details of this proof are omitted as it is possible to simply reuse the 
construction used in the proof of Theorem~\ref{thm:necessarywithmoney}. 
Observe that the utility functions defined there also guarantee 
$u_i(A)\leq u_i(A')$ for all $i\in{\cal A}$, \ie\
$A$ is not Pareto optimal, but $A'$ is. If we were to make $A$ 
the initial allocation, then $\delta=(A,A')$ would be the only
cooperatively rational deal (as every other deal would decrease social welfare).
\end{proof}

\subsection{0-1 Scenarios}

We conclude our study of the rational negotiation framework without side payments by 
identifying a class of utility functions where we are able to achieve a reduction in 
structural complexity.\footnote{%
As dichotomous functions are a special case of the \emph{non-negative} functions, the full range
of (not independently decomposable) deals is also necessary in scenarios with non-negative functions.
Interestingly, this changes when we restrict ourselves to \emph{positive} utility functions.
Now the result of Theorem~\ref{thm:necessarywithoutmoney} would \emph{not} hold anymore,
because any deal that would involve
a particular agent (with a positive utility function) giving away all its resources
without receiving anything in return could never be cooperatively rational.
Hence, by Theorem~\ref{thm:sufficientwithoutmoney}, such a deal could never 
be necessary to achieve a Pareto optimal allocation either.}
Consider a scenario where 
agents use additive utility functions that assign either $0$ or $1$
to every single resource (this is what we call 0-1 functions).\footnote{%
Recall the distinction between 0-1 functions and dichotomous functions.
The latter assign either 0 or 1 to each \emph{bundle}, while the former
assign either 0 or 1 to each \emph{individual resource} (the utilities of 
bundles then follow from the fact that 0-1 functions are additive).} 
This may be appropriate when we simply wish to distinguish
whether or not the agent \emph{needs} a particular resource (to execute a given plan,
for example). This is, for instance, the case for some of the agents defined
in the work of \citeA{SadriToniTorroniATAL2001}.
As the following theorem shows, for \emph{0-1 scenarios} (\ie\
for systems where all utility functions are 0-1 functions), 1-deals
are sufficient to guarantee convergence to an allocation with maximal 
utilitarian social welfare, even in the framework \emph{without} monetary 
side payments (where all deals are required to be cooperatively rational). 
 
\begin{theorem}[0-1 scenarios]\label{thm:zeroone}
In 0-1 scenarios, any sequence of cooperatively rational 1-deals will eventually 
result in an allocation with maximal utilitarian social welfare.
\end{theorem}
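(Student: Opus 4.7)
The plan is to mimic the structure of the proof of Theorem~\ref{thm:additive}, but use cooperative rationality (rather than individual rationality) together with the special 0-1 structure of the utility functions.

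First I would establish termination: since every cooperatively rational deal strictly increases $sw_u$ (this is the direction of the relationship noted immediately after Definition~\ref{def:cooperativerationality}) and there are only finitely many allocations, any negotiation sequence must stop after finitely many steps. So it suffices to show that no terminal allocation can fail to be utilitarian-optimal.

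Next I would argue by contradiction. Suppose negotiation has halted at allocation $A$ but there exists $A'$ with $sw_u(A) < sw_u(A')$. Since 0-1 functions are additive, I can use the same decomposition of $sw_u$ as in the proof of Theorem~\ref{thm:additive}, writing $sw_u(A) = \sum_{r \in {\cal R}} u_{f_A(r)}(\{r\})$. The strict inequality then forces the existence of some resource $r \in {\cal R}$ with $u_{f_A(r)}(\{r\}) < u_{f_{A'}(r)}(\{r\})$. Because utilities of singletons lie in $\{0,1\}$, the only way this can happen is $u_{f_A(r)}(\{r\}) = 0$ and $u_{f_{A'}(r)}(\{r\}) = 1$.

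The key step is then to verify that the 1-deal $\delta$ transferring $r$ from its current holder $i = f_A(r)$ to agent $j = f_{A'}(r)$ is cooperatively rational, which is where 0-1 additivity really matters: the giver $i$ loses exactly $u_i(\{r\}) = 0$ in utility, so $u_i(A) = u_i(A_\delta)$; the receiver $j$ gains exactly $u_j(\{r\}) = 1$, so $u_j(A) < u_j(A_\delta)$; and all other agents are unaffected. Hence the conditions of Definition~\ref{def:cooperativerationality} are satisfied, contradicting the assumption that $A$ is terminal. I do not expect any serious obstacle here---the only subtlety is checking that the giver genuinely suffers no loss, which is precisely why additivity together with $u_i(\{r\})=0$ is essential, and why the analogous statement fails for arbitrary dichotomous (non-additive) functions.
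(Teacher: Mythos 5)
Your proposal is correct and follows essentially the same route as the paper's own proof: termination via the strict increase in $sw_u$ under cooperative rationality, followed by locating a resource held by an agent valuing it at $0$ while another agent values it at $1$, and observing that the corresponding 1-deal is cooperatively rational. The paper states this more tersely, but your use of the additive decomposition from Theorem~\ref{thm:additive} is exactly the justification the paper's argument implicitly relies on.
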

 
\begin{proof}
Termination is shown as in the proof of Theorem~\ref{thm:sufficientwithoutmoney}.
If an allocation $A$ does not have maximal social welfare then it must
be the case that some agent $i$ holds a resource $r$ with $u_i(\{r\})=0$ and
there is another agent $j$ in the system with $u_j(\{r\})=1$. Passing $r$
from $i$ to $j$ would be a cooperatively rational deal, so either
negotiation has not yet terminated or we are indeed in a situation with 
maximal utilitarian social welfare.
\end{proof}

\afterproof 
This result may be interpreted as a formal justification for some of the
negotiation strategies proposed by \citeA{SadriToniTorroniATAL2001}.

\section{Egalitarian Agent Societies}
\label{sec:egalitarian}

In this section, we are going to apply the same methodology that we have used 
to study optimal outcomes of negotiation in systems designed according to
utilitarian principles in the first part of this paper to the analysis of
\emph{egalitarian agent societies}. The classical counterpart to the utilitarian
collective utility function $sw_u$ is the egalitarian collective utility function
$sw_e$ introduced in Definition~\ref{def:swe} \cite{Moulin1988a,Sen1970,Rawls1971a}.
Therefore, we are going to study the design of agent societies for which 
negotiation can be guaranteed to converge to an allocation of resources
with maximal egalitarian social welfare.

Our first aim will be to identify a suitable criterion that 
agents inhabiting an egalitarian agent society may use 
to decide whether or not to accept a particular deal. Clearly, cooperatively
rational deals, for instance, would not be an ideal choice, because 
Pareto optimal allocations will typically not be optimal from an egalitarian 
point of view \cite{Moulin1988a}.

\subsection{Pigou-Dalton Transfers and Equitable Deals}

When searching the economics literature for a class of deals that would
benefit society in an egalitarian system we soon encounter
\emph{Pigou-Dalton transfers}. The Pigou-Dalton \emph{principle} states 
that whenever a utility transfer between two agents takes place which reduces 
the difference in utility between the two, then that transfer should be 
considered socially beneficial \cite{Moulin1988a}. In the context of our 
framework, a Pigou-Dalton transfer (between agents~$i$ and~$j$) can be defined 
as follows.
\begin{definition}[Pigou-Dalton transfers]\label{def:pigoudalton}
A deal $\delta=(A,A')$ is called a Pigou-Dalton transfer iff it satisfies
the following criteria:
\begin{itemize}
\item Only two agents $i$ and $j$ are involved in the deal:
${\cal A}^\delta=\{i,j\}$.
\item The  deal is mean-preserving:
$u_i(A)+u_j(A) = u_i(A')+u_j(A')$.
\item The deal reduces inequality:
$|u_i(A')-u_j(A')| < |u_i(A)-u_j(A)|$.
\end{itemize}
\end{definition}
The second condition in this definition could be relaxed to 
postulate $u_i(A)+u_j(A) \leq u_i(A')+u_j(A')$,
to also allow for inequality-reducing deals that increase overall utility.

Pigou-Dalton transfers capture certain egalitarian principles; but are they
sufficient as acceptability criteria to guarantee negotiation outcomes 
with maximal egalitarian social welfare? Consider the following example:

\begin{center}
\begin{tabular}{ll} \hline
$u_1(\emptyset) = 0$ & $u_2(\emptyset) = 0$ \\
$u_1(\{r_1\}) = 3$ & $u_2(\{r_1\}) = 5$ \\ 
$u_1(\{r_2\}) = 12$ & $u_2(\{r_2\}) = 7$ \\ 
$u_1(\{r_1,r_2\}) = 15$ & $u_2(\{r_1,r_2\}) = 17$ \\ \hline 
\end{tabular}
\end{center}
The first agent attributes a relatively low utility value to 
$r_1$ and a high one to $r_2$. Furthermore, the value of both 
resources together is simply the sum of the individual utilities, \ie\
agent~1 is using an additive utility function (no synergy effects). 
The second agent ascribes a medium value to either resource and a 
very high value to the full set.
Now suppose the initial allocation of resources is $A$
with $A(1)=\{r_1\}$ and $A(2)=\{r_2\}$. The ``inequality index'' 
for this allocation is $|u_1(A)-u_2(A)|=4$.
We can easily check that inequality is in fact minimal for allocation $A$ (which
means that there can be no inequality-reducing deal, and certainly no Pigou-Dalton 
transfer, given this allocation).
However, allocation $A'$ with $A'(1)=\{r_2\}$ and $A'(2)=\{r_1\}$ would 
result in a higher level of egalitarian social welfare (namely $5$ instead of $3$).
Hence, Pigou-Dalton transfers alone are not sufficient to guarantee
optimal outcomes of negotiations in egalitarian agent societies.
We need a more general acceptability criterion.

Intuitively, agents operating according to egalitarian principles should
help any of their fellow agents that are worse off than they are themselves
(as long as they can afford to do so without themselves ending up even worse).
This means, the purpose of any exchange of resources should be to improve
the welfare of the weakest agent involved in the respective deal.
We formalise this idea by introducing the class of \emph{equitable} deals.

\begin{definition}[Equitable deals]\label{def:equitable}
A deal $\delta=(A,A')$ is called equitable iff 
it satisfies the following criterion:
\[\begin{array}{rcl}
\min\{u_i(A)\;|\;i\in{\cal A}^\delta\} & < & \min\{u_i(A')\;|\;i\in{\cal A}^\delta\}
\end{array}\]
\end{definition}
Recall that ${\cal A}^\delta = \{i\in{\cal A}\;|\,A(i)\not=A'(i)\}$
denotes the set of agents involved in the
deal $\delta$. Given that for $\delta=(A,A')$ to be a deal we require 
$A\not=A'$, ${\cal A}^\delta$ can never be the empty set (\ie\
the minima referred to in above definition are well-defined).
Note that equitability is a local rationality criterion in the sense
of Definition~\ref{def:locality}.

It is easy to see that any Pigou-Dalton transfer will also be an equitable 
deal, because it will always result in an improvement for the weaker one of 
the two agents concerned.
The converse, however, does not hold (not even if we restrict ourselves to 
deals involving only two agents). In fact, equitable deals may even increase 
the inequality of the agents concerned, namely in cases where the happier 
agent gains more utility than the weaker does.

In the literature on multiagent systems, the \emph{autonomy} of an agent
(one of the central features distinguishing multiagent systems from other distributed
systems) is sometimes equated with pure selfishness. Under such an interpretation
of the agent paradigm, our notion of equitability would, of course, make little sense.
We believe, however, that it is useful to distinguish different degrees
of autonomy. An agent may well be autonomous in its decision in general, but still
be required to follow certain rules imposed by society (and agreed to by the agent on
entering that society).

\subsection{Local Actions and their Global Effects}

We are now going to prove two lemmas that provide the connection between the local
acceptability criterion given by the notion of equitability and the two egalitarian
social welfare orderings introduced in Section~\ref{sec:swos} (\ie\
the maximin-ordering induced by $sw_e$ as well as the leximin-ordering).

The first lemma shows how global changes are reflected locally.
If a deal happens to increase (global) egalitarian social welfare, that is, if it
results in a rise with respect to the maximin-ordering, then that deal
will in fact be an equitable deal.

\begin{lemma}[Maximin-rise implies equitability]
\label{lem:eswequ}
If $A$ and $A'$ are allocations with $sw_e(A)<sw_e(A')$,
then $\delta=(A,A')$ is an equitable deal.
\end{lemma}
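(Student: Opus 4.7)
The plan is to exploit the fact that agents not involved in the deal contribute the same utility to $A$ and to $A'$, so any change in the global egalitarian social welfare must be ``witnessed'' by some agent in $\mathcal{A}^\delta$. Set $m = sw_e(A)$ and $m' = sw_e(A')$, so by hypothesis $m < m'$.

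First I would bound the minimum at the new allocation over the involved agents from below: since $\mathcal{A}^\delta \subseteq \mathcal{A}$, we immediately have
\[
\min\{u_i(A') \mid i \in \mathcal{A}^\delta\} \;\geq\; \min\{u_i(A') \mid i \in \mathcal{A}\} \;=\; sw_e(A') \;=\; m'.
\]
Next I would bound the minimum at the old allocation over the involved agents from above. For every agent $i \notin \mathcal{A}^\delta$ we have $A(i)=A'(i)$ and hence $u_i(A)=u_i(A') \geq sw_e(A') = m'$. Since $sw_e(A) = m < m'$, the agent (or one of the agents) achieving this global minimum cannot lie outside $\mathcal{A}^\delta$; so there exists some $i^* \in \mathcal{A}^\delta$ with $u_{i^*}(A) = m$. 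Therefore
\[
\min\{u_i(A) \mid i \in \mathcal{A}^\delta\} \;\leq\; u_{i^*}(A) \;=\; m \;<\; m' \;\leq\; \min\{u_i(A') \mid i \in \mathcal{A}^\delta\},
\]
which is precisely the inequality required by Definition~\ref{def:equitable}.

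There is no real obstacle here; the only thing to be careful about is making sure $\mathcal{A}^\delta$ is non-empty (which follows from $A \neq A'$ in the definition of a deal) so that the minima are well defined, and making sure the ``witness'' $i^*$ belongs to $\mathcal{A}^\delta$, which is exactly where the observation about unchanged agents is needed.
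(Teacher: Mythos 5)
Your proof is correct and follows essentially the same route as the paper's: you show that the minimum over the involved agents at $A'$ is bounded below by $sw_e(A')$ because ${\cal A}^\delta\subseteq{\cal A}$, and that some agent attaining the global minimum at $A$ must be involved in the deal (since uninvolved agents keep their utility, which is at least $sw_e(A')>sw_e(A)$). The paper states this second point as the equality $\min\{u_i(A)\mid i\in{\cal A}^\delta\}=sw_e(A)$, whereas you only need (and only claim) the inequality $\leq$, which is a harmless and equally valid variant.
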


\begin{proof}
Let $A$ and $A'$ be allocations with $sw_e(A)<sw_e(A')$ and let
$\delta=(A,A')$.
Any agent with minimal utility for allocation $A$ must be involved in
$\delta$, because egalitarian social welfare, and thereby these agents'
individual utility, is higher for allocation $A'$.
That is, we have $\min\{u_i(A)\;|\;i\in{\cal A}^\delta\}=sw_e(A)$.
Furthermore, because ${\cal A}^\delta\subseteq{\cal A}$, we certainly have
$sw_e(A')\leq\min\{u_i(A')\;|\;i\in{\cal A}^\delta\}$. 
Given our original assumption of $sw_e(A)<sw_e(A')$, we now 
obtain the inequation
$\min\{u_i(A)\;|\;i\in{\cal A}^\delta\}<\min\{u_i(A')\;|\;i\in{\cal A}^\delta\}$.
This shows that $\delta$ will indeed be an equitable deal.
\end{proof}

\afterproof
Observe that the converse does not hold; not every equitable deal will necessarily
increase egalitarian social welfare (although an equitable deal will never decrease 
egalitarian social welfare either). 
This is, for instance, not the case if only
agents who are currently better off are involved in a deal.
In fact, as argued already at the end of Section~\ref{sec:dealclasses}, 
there can be no class of deals characterisable by a \emph{local}
rationality criterion (see Definition~\ref{def:locality}) that would 
always result in an increase in egalitarian social welfare.

To be able to detect changes in welfare resulting from an equitable deal we
require the finer differentiation between alternative allocations of resources given
by the leximin-ordering. In fact, as we shall see next, any equitable deal can
be shown to result in a strict improvement with respect to the leximin-ordering.

\begin{lemma}[Equitability implies leximin-rise]
\label{lem:equleximin}
If $\delta=(A,A')$ is an 
equitable deal, 
then $A\prec A'$.
\end{lemma}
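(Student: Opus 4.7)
The plan is to unpack what an equitable deal does to the multiset of utility values and then track where the first change shows up in the sorted vector. Set $m = \min\{u_i(A)\,|\,i\in{\cal A}^\delta\}$ and $m' = \min\{u_i(A')\,|\,i\in{\cal A}^\delta\}$, so equitability gives $m < m'$. For $i\notin{\cal A}^\delta$ we have $u_i(A)=u_i(A')$, so the two multisets $\{u_i(A)\,|\,i\in{\cal A}\}$ and $\{u_i(A')\,|\,i\in{\cal A}\}$ agree on a common ``background'' part coming from the uninvolved agents, and differ only on the $|{\cal A}^\delta|$ utilities contributed by agents in ${\cal A}^\delta$, whose new values all exceed $m$.

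The key counting observation I would pin down first is: the number of agents with utility $\leq m$ strictly decreases. Indeed, for $A'$ no agent in ${\cal A}^\delta$ contributes (all exceed $m$), so only the uninvolved agents with $u_i(A)\leq m$ count; for $A$ the same uninvolved agents contribute, plus at least one agent in ${\cal A}^\delta$ achieving the minimum~$m$. Let $s$ be the number of uninvolved agents with utility $\leq m$ and let $D\geq 1$ be the number of agents in ${\cal A}^\delta$ with $u_i(A)=m$ (these are all exactly $m$, not strictly smaller, since $m$ is the minimum over ${\cal A}^\delta$). Set $k^*=s+1$.

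Now I read off the sorted vectors in their initial segments. In $\vec{u}(A')$, positions $1,\ldots,s$ are exactly the sorted values $\{u_i(A)\,|\,i\notin{\cal A}^\delta,\, u_i(A)\leq m\}$, and $\vec{u}_{k^*}(A') > m$ by construction. In $\vec{u}(A)$, positions $1,\ldots,s+D$ are the same multiset augmented by $D$ extra copies of $m$; because the extra elements are each equal to $m$ (the largest value $\leq m$ appearing in the multiset), inserting them into the sorted order places them at the end of the ``$\leq m$'' block without disturbing positions $1,\ldots,s$. Hence $\vec{u}_{k'}(A)=\vec{u}_{k'}(A')$ for $k'<k^*$, while $\vec{u}_{k^*}(A)=m<\vec{u}_{k^*}(A')$. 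By Definition~\ref{def:leximin} this is exactly $A\prec A'$.

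The main obstacle is the final bookkeeping step: verifying that the $D$ extra copies of $m$ in $\vec{u}(A)$ really do slot in at positions $s+1,\ldots,s+D$ rather than interleaving with entries of the common prefix. The argument above relies precisely on the fact that $m$ is the \emph{minimum} of the $V_A$-part (so the extra values are not smaller than $m$) together with the sorted order (so they cannot land before an equal-or-greater entry already $\leq m$); once this is spelled out, the rest is immediate.
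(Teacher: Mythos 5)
Your proof is correct and follows essentially the same route as the paper's: the paper also fixes the threshold $\alpha=\min\{u_i(A)\,|\,i\in{\cal A}^\delta\}$, observes that the sub-$\alpha$ part of the ordered utility vector is untouched while the block of $\alpha$'s strictly shrinks, and concludes the lexicographic rise; your count of agents with utility $\leq m$ and the index $k^*=s+1$ is just a slightly more explicit version of that bookkeeping. No gaps.
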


\begin{proof}
Let $\delta=(A,A')$ be a deal that satisfies the equitability criterion 
and define $\alpha=\min\{u_i(A)\;|\;i\in{\cal A}^\delta\}$.
The value $\alpha$ may be considered as partitioning the ordered utility vector $\vec{u}(A)$
into three subvectors: Firstly, $\vec{u}(A)$ has got a (possibly empty)
prefix $\vec{u}(A)^{<\alpha}$ where all elements are strictly lower than $\alpha$.
In the middle, it has got a subvector $\vec{u}(A)^{=\alpha}$ (with at least one element)
where all elements are equal to $\alpha$. Finally, $\vec{u}(A)$ has got a suffix
$\vec{u}(A)^{>\alpha}$ (which again may be empty) where all elements are strictly greater
than $\alpha$.

By definition of $\alpha$, the deal $\delta$ cannot affect agents
whose utility values belong to $\vec{u}(A)^{<\alpha}$. Furthermore, by definition
of equitability, we have $\alpha<\min\{u_i(A')\;|\;i\in{\cal A}^\delta\}$, which
means that all of the agents that \emph{are}
involved will end up with a utility value which is strictly greater than $\alpha$,
and at least one of these agents will come from $\vec{u}(A)^{=\alpha}$.
We now collect the information we have on $\vec{u}(A')$, the ordered utility vector
of the second allocation $A'$. 
Firstly, it will have a prefix $\vec{u}(A')^{<\alpha}$ identical to 
$\vec{u}(A)^{<\alpha}$. This will be followed by a (possibly empty) subvector 
$\vec{u}(A')^{=\alpha}$ where all elements
are equal to $\alpha$ and which must be strictly shorter than $\vec{u}(A)^{=\alpha}$.
All of the remaining elements of $\vec{u}(A')$ will be strictly greater than $\alpha$.
It follows that $\vec{u}(A)$ lexicographically precedes $\vec{u}(A')$, \ie\
$A\prec A'$ holds as claimed.
\end{proof}

\afterproof
Again, the converse does not hold, \ie\
not every deal resulting in a leximin-rise is necessarily equitable.
Counterexamples are deals where the utility value of the weakest agent involved
stays constant, despite there being an improvement with respect to the leximin-ordering
at the level of society.

A well-known result in welfare economics states that every Pigou-Dalton utility
transfer results in a leximin-rise \cite{Moulin1988a}. Given that we have observed
earlier that every deal that amounts to a Pigou-Dalton transfer will also be an
equitable deal, this result can now also be regarded as a simple corollary to
Lemma~\ref{lem:equleximin}.

\subsection{Maximising Egalitarian Social Welfare}

Our next aim is to prove a convergence result for the egalitarian framework
(in analogy to Theorems~\ref{thm:sufficientwithmoney} and~\ref{thm:sufficientwithoutmoney}). 
We are going to show that systems where agents negotiate equitable deals always converge 
towards an allocation 
with maximal egalitarian social welfare.

\begin{theorem}[Maximal egalitarian social welfare]\label{thm:eswmax}
Any sequence of equitable deals will eventually result in an
allocation of resources with maximal egalitarian social welfare.
\end{theorem}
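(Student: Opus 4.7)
The plan is to mirror the structure of the proofs of Theorems~\ref{thm:sufficientwithmoney} and~\ref{thm:sufficientwithoutmoney}: first establish that negotiation must terminate, and then use a contradiction argument to show that any terminal allocation is egalitarian-optimal. The two key ingredients are already at our disposal, namely Lemma~\ref{lem:equleximin} (equitability implies a strict leximin-rise) and Lemma~\ref{lem:eswequ} (any maximin-rise is witnessed by an equitable deal).

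For termination, I would argue as follows. Since both $\cal A$ and $\cal R$ are finite, there are only finitely many allocations, and hence only finitely many distinct ordered utility vectors $\vec{u}(A)$. By Lemma~\ref{lem:equleximin}, every equitable deal $\delta=(A,A')$ yields $A\prec A'$, i.e.\ $\vec{u}(A)$ is strictly lexicographically preceded by $\vec{u}(A')$. Because $\prec$ is a strict partial order on allocations (and in fact linear on the finite set of possible ordered utility vectors), no ordered utility vector can recur along a sequence of equitable deals, so any such sequence must be finite.

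For optimality, suppose negotiation terminates with some allocation $A$, and assume for contradiction that $A$ does \emph{not} have maximal egalitarian social welfare. Then there exists an allocation $A'$ with $sw_e(A)<sw_e(A')$. By Lemma~\ref{lem:eswequ}, the pair $\delta=(A,A')$ is an equitable deal. But this contradicts the assumption that negotiation had terminated at $A$, since $\delta$ is still available. Hence the terminal allocation must attain the maximum of $sw_e$.

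The main subtlety I expect is the termination step: unlike the utilitarian setting, the progress measure is not a single real number but the leximin position of the current allocation. One has to be a little careful to justify that strict advancement in a lexicographic ordering on vectors of bounded length, drawn from the finite set of values $\{u_i(A(j))\}$, cannot continue indefinitely. Once this is in place, the contradiction step is immediate from Lemma~\ref{lem:eswequ}, and no further machinery is needed.
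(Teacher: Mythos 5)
Your proof is correct and follows essentially the same route as the paper's: termination via Lemma~\ref{lem:equleximin} (strict leximin-rise plus finiteness of the set of allocations, using irreflexivity and transitivity of $\prec$), and optimality of the terminal allocation by contradiction via Lemma~\ref{lem:eswequ}. Your extra care about the lexicographic progress measure is exactly the point the paper handles by noting that $\prec$ is irreflexive and transitive, so nothing is missing.
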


\begin{proof}
By Lemma~\ref{lem:equleximin}, any equitable deal will result
in a strict rise with respect to the leximin-ordering $\prec$
(which is both irreflexive and transitive).
Hence, as there are only a finite number of distinct allocations,
negotiation will have to terminate after a finite number of deals.
So suppose negotiation has terminated and no more equitable deals are
possible. Let $A$ be the corresponding terminal allocation of resources.
The claim is that $A$ will be an allocation with maximal egalitarian social
welfare. For the sake of contradiction, assume it is not, \ie\
assume there exists another allocation $A'$ for the same system
such that $sw_e(A)<sw_e(A')$. But then, by Lemma~\ref{lem:eswequ},
the deal $\delta=(A,A')$ will be an equitable deal. Hence, there is
still a possible deal, namely $\delta$, which contradicts our earlier
assumption of $A$ being a terminal allocation.
This shows that $A$ will be an allocation with maximal egalitarian social
welfare, which proves our claim.
\end{proof}

\afterproof
From a purely practical point of view, Theorem~\ref{thm:eswmax} may be of 
a lesser interest than the corresponding results for utilitarian systems, 
because it does not refer to an acceptability criterion that only depends 
on a \emph{single} agent.
Of course, this coincides with our intuitions about egalitarian societies:
maximising social welfare is only possible by means of cooperation and the
sharing of information on agents' preferences.

After having reached the allocation with maximal egalitarian social
welfare, it may be the case that still some equitable deals are possible,
although they would not increase social welfare any further (but they would
still cause a leximin-rise).
This can be demonstrated by means of a simple example.
Consider a system with three agents and two resources.
The following table fixes the utility functions:

\begin{center}
\begin{tabular}{lll} \hline
$u_1(\emptyset) = 0$   & $u_2(\emptyset) = 6$     & $u_3(\emptyset) = 8$ \\
$u_1(\{r_1\}) = 5$     & $u_2(\{r_1\}) = 7$       & $u_3(\{r_1\}) = 9$ \\ 
$u_1(\{r_2\}) = 0$     & $u_2(\{r_2\}) = 6.5$     & $u_3(\{r_2\}) = 8.5$ \\ 
$u_1(\{r_1,r_2\}) = 5$ & $u_2(\{r_1,r_2\}) = 7.5$ & $u_3(\{r_1,r_2\}) = 9.5$ \\ 
\hline
\end{tabular}
\end{center} 
A possible interpretation of these functions would be the following.
Agent~3 is fairly well off in any case; obtaining either of the resources $r_1$ 
and $r_2$ will not have a great impact on its personal welfare. The same is
true for agent~2, although it is slightly less well off to begin with.
Agent~1 is the poorest agent and attaches great value to $r_1$, but has no
interest in $r_2$. Suppose agent~3 initially holds both resources.
This corresponds to the ordered utility vector $\langle 0,6,9.5\rangle$.
Passing $r_1$ to agent~1 would lead to a new allocation with the ordered
utility vector $\langle 5,6,8.5\rangle$ and increase egalitarian social 
welfare to $5$, which is the maximum that is achievable in this 
system.  However, there is still another equitable deal that could be implemented
from this latter allocation: agent~3 could offer $r_2$ to agent~2. Of course, 
this deal does not affect agent~1. The resulting allocation would then have 
the ordered utility vector $\langle 5,6.5,8\rangle$, which corresponds to the 
leximin-maximal allocation.

To be able to detect situations where a social welfare maximum has already been 
reached but some equitable deals are still possible, and to be able to stop
negotiation (assuming we are only interested in maximising $sw_e$ as quickly
as possible), however, we would require a \emph{non-local} rationality criterion.
No criterion that only takes the welfare of agents involved in a 
particular deal into account could be sharp enough to always tell us
whether a given deal would increase the minimum utility 
in society (see also our discussion after Lemma~\ref{lem:eswequ}).
We could define a class of \emph{strongly equitable} deals that are like equitable
deals but on top of that require the (currently) weakest agent to be involved in
the deal. This would be a sharper criterion, but it would also be against the spirit
of distributivity and locality, because every single agent would be involved
in every single deal (in the sense of everyone having to announce their utility
in order to be able to determine who is the weakest).

\subsection{Necessity Result}

As our next theorem will show, if we restrict the set of admissible deals
to those that are equitable, then every single deal $\delta$ (that is not 
independently decomposable) may be necessary
to guarantee an optimal result (that is, no sequence of equitable deals excluding
$\delta$ could possibly result in an allocation with maximal egalitarian social
welfare). Furthermore, our theorem improves upon a previous result \cite{EndrissEtAlMFI2003} 
by showing that this holds even when all utility functions are required to be dichotomous.\footnote{%
This theorem also corrects a mistake in the original statement of
the result \cite{EndrissEtAlMFI2003}, where the restriction to deals that are not
independently decomposable had been omitted.}

\begin{theorem}[Necessary deals in egalitarian systems]
\label{thm:egalitariannecessity}
Let the sets of agents and resources be fixed.
Then for every deal $\delta$ that is not independently decomposable, there exist utility functions 
and an initial allocation such that any sequence of equitable deals leading to an allocation 
with maximal egalitarian social welfare would have to include $\delta$.
This continues to be the case even when all utility functions are required to be dichotomous.
\end{theorem}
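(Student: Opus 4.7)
The plan is to mimic the dichotomous-case construction from the proof of Theorem~\ref{thm:necessarywithmoney} and verify that it does the job in the egalitarian setting as well. Because dichotomous utilities are a special case of arbitrary utilities, a single dichotomous construction settles both claims of the theorem in one stroke.

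Given a deal $\delta=(A,A')$ that is not independently decomposable, pick an agent $j\in{\cal A}$ with $A(j)\neq A'(j)$ (such a $j$ exists because $A\neq A'$) and set
\[
u_i(R) \;=\; \left\{\begin{array}{ll}
1 & \mbox{if}\ R=A'(i),\ \mbox{or if}\ R=A(i)\ \mbox{and}\ i\neq j \\
0 & \mbox{otherwise}
\end{array}\right.
\]
which is dichotomous. Three facts then need checking. First, $sw_e(A')=1$ while $sw_e(A)=0$ (agent $j$ contributes $0$ at $A$), and $A'$ is in fact the \emph{unique} allocation with $sw_e=1$: any allocation $B$ with $u_i(B)=1$ for every $i$ must satisfy $B(j)=A'(j)$ and $B(i)\in\{A(i),A'(i)\}$ for each $i\neq j$, so by the observation following Definition~\ref{def:id} the only valid allocations of this form are $A$ and $A'$, forcing $B=A'$.

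Second, from the initial allocation $A$ the \emph{only} equitable deal is $\delta$ itself. Suppose $\delta'=(A,B)$ is equitable with involved set $S={\cal A}^{\delta'}$. If $j\notin S$ then every $i\in S$ already has $u_i(A)=1$, so the strict increase demanded by equitability is impossible since utilities never exceed $1$. Hence $j\in S$, the minimum over $S$ at $A$ equals $0$, and equitability forces $u_i(B)=1$ for every $i\in S$; since $B(i)\neq A(i)$ for $i\in S$, the definition of $u_i$ then pins down $B(i)=A'(i)$ on $S$ while $B(i)=A(i)$ off $S$. Thus $B(i)\in\{A(i),A'(i)\}$ for all $i$, and non-decomposability of $\delta$ yields $B\in\{A,A'\}$; since $j\in S$ rules out $B=A$, we conclude $B=A'$ and $\delta'=\delta$.

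Combining the two facts: any sequence of equitable deals starting at $A$ that reaches a socially optimal allocation must reach the unique optimum $A'$, and its first step out of $A$ must be the only equitable deal available there, namely $\delta$, so $\delta$ appears in every such sequence. The one delicate step is the second one, because equitability is a \emph{local} criterion and one might worry that some smaller ``partial'' deal could also lift the minimum utility among its participants from $0$ to $1$; the non-decomposability of $\delta$ is precisely what rules this out, as it forbids the existence of any valid intermediate allocation agreeing with $A$ on some agents and with $A'$ on the rest.
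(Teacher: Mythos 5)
Your proposal is correct and uses exactly the construction the paper uses for this theorem: the dichotomous utilities assigning $1$ to $A'(i)$ for every agent and to $A(i)$ for every agent other than a distinguished $j$ with $A(j)\neq A'(j)$. The only genuine difference lies in how you verify that $\delta$ is the unique equitable deal available at $A$. The paper does this indirectly: it computes the ordered utility vectors ($\vec{u}(A')=\langle 1,\ldots,1\rangle$, $\vec{u}(A)=\langle 0,1,\ldots,1\rangle$, every other allocation beginning with a $0$ and hence satisfying $B\preceq A$), concludes that $\delta$ is the only deal producing a strict leximin-rise from $A$, and then invokes Lemma~\ref{lem:equleximin} (equitability implies leximin-rise) to deduce that $\delta$ is the only candidate equitable deal. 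You instead argue directly from Definition~\ref{def:equitable}: any equitable deal out of $A$ must involve $j$ (otherwise all participants already sit at the maximal utility $1$), must lift every participant to utility $1$, hence must move every participant to its $A'$-bundle, and non-decomposability then forces the target allocation to be $A'$ itself. Both arguments are sound and both invoke non-decomposability at the same point, namely to exclude ``hybrid'' allocations agreeing with $A$ on some agents and with $A'$ on the rest. Your version is more self-contained --- it bypasses the leximin machinery entirely --- and it makes explicit why partial deals are blocked, a point the paper leaves somewhat implicit; the paper's version is shorter given that Lemma~\ref{lem:equleximin} is already established, and its leximin computation simultaneously settles the uniqueness of the optimum $A'$.
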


\begin{proof}
Given a set of agents $\cal A$ and a set of resources $\cal R$, 
let $\delta=(A,A')$ be any deal for this system.
As we have $A\not=A'$, there will be a (at least one) agent $j\in{\cal A}$
with $A(j)\not=A'(j)$. We use this particular $j$ to fix suitable (dichotomous) 
utility functions $u_i$ for agents $i\in{\cal A}$ and sets of resources
$R\subseteq{\cal R}$ as follows:
\[\begin{array}{rcl}
u_i(R) & = & \left\{\begin{array}{ll}
1 & \mbox{if}\ R=A'(i)\ \mbox{or}\ (R=A(i)\ \mbox{and}\ i\not=j) \\
0 & \mbox{otherwise}
\end{array}\right.
\end{array}\]
That is, for allocation $A'$ every agent assigns a utility value of $1$ to the
resources it holds. The same is true for allocation $A$, with the sole exception
of agent~$j$, who only assigns a value of $0$. For any other allocation, agents
assign the value of $0$ to their set of resources, unless that set is the same
as for either allocation $A$ or $A'$. As $\delta$ is not independently decomposable, this
will happen for at least one agent for every allocation different from both $A$ and $A'$.
Hence, for every such allocation at least one agent will assign a utility value of $0$ to its allocated 
bundle. We get $sw_e(A')=1$, $sw_e(A)=0$, and $sw_e(B)=0$ for every other
allocation $B$, \ie\
$A'$ is the only allocation with maximal egalitarian social welfare.

The ordered utility vector of $A'$ is of the form $\langle 1,\ldots,1\rangle$,
that of $A$ is of the form $\langle 0,1,\ldots,1\rangle$, and that of any other
allocation has got the form $\langle 0,\ldots\rangle$, \ie\
we have $A\prec A'$ and $B\preceq A$ for all allocations $B$ with $B\not= A$ and $B\not= A'$.
Therefore, if we make $A$ the initial allocation of resources, then $\delta$ will
be the only deal that would result in a strict rise with respect to the leximin-ordering.
Thus, by Lemma~\ref{lem:equleximin}, $\delta$ would also be
the only equitable deal. Hence, if the set of admissible deals is restricted
to equitable deals then $\delta$ is indeed necessary to reach an allocation
with maximal egalitarian social welfare.
\end{proof}
 
\afterproof
This result shows, again, that there can be no structurally simple class of deals (such as 
the class of deals only involving two agents at a time) that would be sufficient 
to guarantee an optimal outcome of negotiation. This is the case even when
agents only have very limited options for modelling their preferences (as is the
case for dichotomous utility functions).\footnote{%
However, observe that unlike for our two variants of the framework of rational negotiation,
we do not have a necessity result for scenarios with monotonic utility functions
(see Theorems~\ref{thm:necessarywithmoney} and~\ref{thm:necessarywithoutmoney}).
Using a collection of monotonic utility functions as in the proof of 
Theorem~\ref{thm:necessarywithmoney} would not allow us to draw any conclusions
regarding the respective levels of egalitarian social welfare of $A$ and $A'$
on the one hand, and other allocations $B$ on the other.}

While this negative necessity result is shared with the two other instances
of our negotiation framework we have considered, there are currently no 
positive results on the sufficiency of 1-deals for restricted domains in the 
egalitarian setting (see Theorems~\ref{thm:additive} and~\ref{thm:zeroone}). 
For instance, it is not difficult to construct
counterexamples that show that even when all agents are using additive 0-1
functions, complex deals involving all agents at the same time may be required
to reach an allocation with maximal egalitarian social welfare
\cite<a concrete example may be found in>{EndrissEtAlMFI2003}.

\section{Negotiating Lorenz Optimal Allocations}
\label{sec:lorenz}

In this section, we are going to analyse our framework of resource allocation by negotiation
in view of the notion of \emph{Lorenz optimal} allocations introduced in Definition~\ref{def:lorenz}.
We begin with a somewhat more general discussion of possible local rationality criteria
for the acceptability of a given deal.

\subsection{Local Rationality Criteria and Separability}

So far, we have studied three different variants of our negotiation framework:
$(i)$~rational negotiation with side payments (aiming at maximising utilitarian social welfare);
$(ii)$~rational negotiation without side payments (aiming at Pareto optimal outcomes); and
$(iii)$~negotiation in egalitarian agent societies. 
The first two instances of our framework, where agents are either individually rational or
cooperatively rational, have been natural choices as they formalise the widely made
assumptions that agents are both purely self-interested and myopic (for scenarios
with and without monetary side payments, respectively). 

The third variant of the framework, which applies to egalitarian agent societies, 
is attractive for both conceptual and technical reasons. Conceptually, egalitarian 
social welfare is of interest, because it has largely been neglected in the multiagent 
systems literature  despite being the classical counterpart to the widely used notion 
of utilitarian social welfare.
Technically, the analysis of egalitarian agent societies has been interesting, because
egalitarian social welfare does not admit the definition of a local rationality 
criterion that directly captures the class of deals resulting in an increase 
with respect to this metric. This is why we took the detour via the 
leximin-ordering to prove termination.

The class of social welfare orderings that can be captured by deals conforming
to a local rationality criterion is closely related to the class of \emph{separable}
social welfare orderings \cite{Moulin1988a}. In a nutshell, a social welfare
ordering is separable iff it only depends on the agents \emph{changing utility}
whether or not a given deal will result in an increase in social welfare.
Compare this with the notion of a local rationality criterion (Definition~\ref{def:locality});
here it only depends on the agents \emph{changing bundle} whether or not a deal is
acceptable. A change in utility presupposes a change in bundle (but not vice versa).
Hence, every  separable social welfare ordering corresponds to a class of deals 
characterised by a local rationality criterion (but not vice versa).
This means that every separable social welfare ordering gives rise to a local
rationality criterion and proving a general convergence theorem becomes 
straightforward.\footnote{Indeed, in the case of the framework of rational 
negotiation with side payments, the central argument in the proof of 
Theorem~\ref{thm:sufficientwithmoney} has been Lemma~\ref{lem:rationaldeals},
which shows that individual rationality is in fact equivalent to the local 
rationality criterion induced by $sw_u$.}
The leximin-ordering, for instance, is separable \cite{Moulin1988a},
which is why we are not going to further discuss this social welfare ordering
in this paper.\footnote{A suitable rationality criterion would simply amount
to a lexicographic comparison of the ordered utility vectors for the subsociety
of the agents involved in the deal in question.}

Similarly, the ordering over alternative allocations induced by the notion
of Lorenz domination (Definition~\ref{def:lorenz}) is also separable.
Hence, the definition of an appropriate class of deals and the proof
of a general convergence result would not yield any significant new insights either.
However, here the analysis of the effects that some of the previously introduced
rationality criteria have on agent societies when social well-being is 
assessed in terms of the Lorenz condition \emph{is} rather  instructive, 
as we are going to see next.

\subsection{Lorenz Domination and Existing Rationality Criteria}

We are now going to try to establish connections between the global welfare
measure induced by the notion of Lorenz domination on the one hand, and various
local criteria on the acceptability of a proposed deal that individual agents
may choose to apply on the other. For instance, it is an immediate
consequence of Definitions~\ref{def:lorenz} and~\ref{def:cooperativerationality} 
that, whenever $\delta=(A,A')$ is a cooperatively rational deal,
then $A$ must be Lorenz dominated by $A'$.
As may easily be verified, any deal that amounts to a Pigou-Dalton transfer 
(see Definition~\ref{def:pigoudalton}) will also result in a Lorenz improvement.
On the other hand, it is not difficult to construct examples that show
that this is not the case for the class of equitable deals anymore (see
Definition~\ref{def:equitable}). That is, while some equitable 
deals will result in a Lorenz improvement, others will not.

Our next goal is to check whether it is possible to combine existing rationality 
criteria and define a class of deals that captures the notion of
Lorenz improvements in as so far as, for any two allocations $A$ and $A'$
such that $A$ is Lorenz dominated by $A'$, there exists a sequence of deals
(or possibly even a single deal) belonging to that class leading from $A$ to $A'$.
Given that both cooperatively rational deals and Pigou-Dalton transfers
always result in a Lorenz improvement, the union of these two classes of deals
may seem like a promising candidate. In fact, according to a result reported by
\citeA[Lemma~2.3]{Moulin1988a}, it is the case that any Lorenz improvement
can be implemented by means of a sequence of Pareto improvements (\ie\
cooperatively rational exchanges) and Pigou-Dalton transfers.
It is important to stress that this seemingly general result does
\emph{not} apply to our negotiation framework.
To see this, we consider the following example:

\begin{center}
\begin{tabular}{lll} \hline
$u_1(\emptyset) = 0$   & $u_2(\emptyset) = 0$   & $u_3(\emptyset) = 0$ \\
$u_1(\{r_1\}) = 6$     & $u_2(\{r_1\}) = 1$     & $u_3(\{r_1\}) = 1$ \\ 
$u_1(\{r_2\}) = 1$     & $u_2(\{r_2\}) = 6$     & $u_3(\{r_2\}) = 1$ \\ 
$u_1(\{r_1,r_2\}) = 7$ & $u_2(\{r_1,r_2\}) = 7$ & $u_3(\{r_1,r_2\}) = 10$\\ 
\hline
\end{tabular}
\end{center} 
Let $A$ be the allocation in which agent~3 owns both resources, \ie\
$\vec{u}(A)=\langle 0,0,10\rangle$ and
utilitarian social welfare is currently $10$. Allocation $A$ is Pareto optimal,
because any other allocation would be strictly worse for agent~$3$. Hence, there can
be no cooperatively rational deal that would be applicable in this situation.
We also observe that any deal involving only two agents would at best result in a new
allocation with a utilitarian social welfare of $7$ (this would be a deal consisting either
of passing both resources on to one of the other agents, or of passing the ``preferred'' resource
to either agent~$1$ or agent~$2$, respectively). Hence, no deal involving only two agents
(and in particular no Pigou-Dalton transfer) could possibly result in a Lorenz improvement.
However, there \emph{is} an allocation that Lorenz dominates $A$, namely the allocation
assigning to each one of the first two agents their respectively preferred resource.
This allocation $A'$ with $A'(1)=\{r_1\}$, $A'(2)=\{r_2\}$ and $A'(3)=\emptyset$ has
got the ordered utility vector $\langle 0,6,6\rangle$.

The reason why the general result reported by Moulin is not applicable to our domain is 
that we cannot use Pigou-Dalton transfers to implement arbitrary utility transfers here.
Moulin assumes that every possible utility vector constitutes a feasible agreement.
In the context of resource allocation, this would mean that there is an allocation
for every possible utility vector.
In our framework, where agents negotiate over a finite number of indivisible
resources, however, the range of feasible allocations is limited. 
For instance, in the above example there is no feasible allocation with the 
(ordered) utility vector $\langle 0,4,6\rangle$. In Moulin's system, agents could first
move from $\langle 0,0,10\rangle$ to  $\langle 0,4,6\rangle$ (a Pigou-Dalton transfer) and 
then from $\langle 0,4,6\rangle$ to $\langle 0,6,6\rangle$ (a Pareto improvement).
In our system, on the other hand, this is not possible.

\subsection{Simple Pareto-Pigou-Dalton Deals and 0-1 Scenarios}

As we cannot use existing rationality criteria to compose a criterion
that captures the notion of a Lorenz improvement (and that would allow us
to prove a general convergence theorem), we are going to investigate
how far we can get in a scenario with restricted utility functions.
Recall our definition of 0-1 scenarios where utility functions can only be used to indicate 
whether an agent does or does not need a particular resource.
As we shall see next, for 0-1 scenarios, the aforementioned result of Moulin \emph{does}
apply. In fact, we can even sharpen it a little by showing that only Pigou-Dalton transfers
and cooperatively rational deals involving just a single resource and two agents each are
required to guarantee negotiation outcomes that are Lorenz optimal.
We first give a formal definition of this class of deals.
\begin{definition}[Simple Pareto-Pigou-Dalton deals]
A deal $\delta$ is called a simple Pareto-Pigou-Dalton deal iff 
it is a 1-deal and either cooperatively rational or 
a Pigou-Dalton transfer.
\end{definition}
We are now going to show that this class of deals is sufficient to guarantee
Lorenz optimal outcomes of negotiations in 0-1 scenarios

\begin{theorem}[Lorenz optimal outcomes]\label{thm:lorenz}
In 0-1 scenarios, any sequence of simple Pareto-Pigou-Dalton deals will eventually
result in a Lorenz optimal allocation of resources.
\end{theorem}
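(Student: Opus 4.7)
My plan is to follow the template of the convergence results already established in the paper (Theorems~\ref{thm:sufficientwithmoney}, \ref{thm:sufficientwithoutmoney}, and~\ref{thm:eswmax}): first prove termination, then derive Lorenz optimality of the terminal allocation by contradiction. Termination follows because every simple Pareto-Pigou-Dalton deal strictly improves the current allocation in the Lorenz order. A cooperatively rational deal is a pointwise Pareto improvement, and sorting preserves pointwise dominance, so all partial sums of the ordered utility vector weakly increase with at least one strict inequality. A Pigou-Dalton transfer is a classical Lorenz-improver: it preserves the total utility while strictly reducing the gap between the two agents involved. Since Lorenz domination is a strict partial order on a finite set of allocations, any such sequence terminates.

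For the optimality step, suppose for contradiction that negotiation terminates at an allocation $A$ that is not Lorenz optimal, witnessed by some $A'$ that Lorenz-dominates $A$. If there exist an agent $i$, a resource $r \in A(i)$ with $u_i(\{r\}) = 0$, and an agent $j$ with $u_j(\{r\}) = 1$, then passing $r$ from $i$ to $j$ is a cooperatively rational 1-deal, contradicting termination. So we may assume every resource in $A$ is either wanted by its holder or wanted by no one. Letting $\mathcal{R}^* = \{r \in \mathcal{R} : \exists i,\ u_i(\{r\}) = 1\}$ denote the ``active'' resources, additivity of 0-1 functions gives $u_i(A) = |A(i) \cap \mathcal{R}^*|$, so $sw_u(A) = |\mathcal{R}^*|$ is the maximum possible utilitarian welfare. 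Since Lorenz domination implies $sw_u(A') \geq sw_u(A)$, equality must hold, and $A'$ too assigns every active resource to an agent wanting it.

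It remains to exhibit a Pigou-Dalton 1-deal at $A$. The key claim I would establish is that there exist agents $p, q$ and a resource $r \in A(p)$ with $u_p(A) \geq u_q(A) + 2$ and $u_q(\{r\}) = 1$. Under our case assumption, $u_p(\{r\}) = 1$ automatically, so the transfer of $r$ from $p$ to $q$ is mean-preserving; and the utility gap of at least $2$ forces $|u_p(A) - u_q(A)|$ to strictly decrease, producing a Pigou-Dalton 1-deal and the desired contradiction. To extract such a triple I would compare the two assignment functions $f_A, f_{A'} : \mathcal{R}^* \to \mathcal{A}$ and form the directed multigraph on agents whose edges are $f_A(r) \to f_{A'}(r)$ for each resource that moves between $A$ and $A'$, then read off a suitable single edge using the Lorenz improvement.

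The main obstacle, and the technical heart of the proof, is precisely this combinatorial extraction. The Lorenz improvement from $A$ to $A'$ might in principle be realised by many near-equal swap-like moves along chains, each link connecting agents whose current utilities differ by only $1$; no such link in isolation satisfies the strict inequality-reduction condition of a Pigou-Dalton transfer. Handling this requires careful analysis along alternating paths in the difference multigraph --- tracing from a high-utility net-loser to a low-utility net-gainer and arguing that at least one edge along the path must straddle a utility gap of at least~$2$.
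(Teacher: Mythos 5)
You have the right architecture --- termination via the fact that both cooperatively rational deals and Pigou-Dalton transfers are Lorenz improvements, then a contradiction argument at a terminal allocation, split into the case where some held resource is unwanted by its holder (yielding a cooperatively rational 1-deal) and the case where utilitarian welfare is already maximal --- and this matches the paper's proof up to that point. But the step you yourself flag as ``the technical heart'', namely extracting a single resource $r\in A(p)$ with $u_p(\{r\})=u_q(\{r\})=1$ and $u_p(A)\geq u_q(A)+2$, is not merely left open in your write-up: it cannot be carried out in general, so the alternating-path analysis you defer to cannot succeed. Concretely, take three agents and resources $\{a,b,c\}$ with $u_1$ wanting only $c$, $u_2$ wanting $b$ and $c$, and $u_3$ wanting all three, and let $A(1)=\emptyset$, $A(2)=\{c\}$, $A(3)=\{a,b\}$. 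Every resource is wanted by its holder, so no cooperatively rational 1-deal exists; and every 1-deal between two agents who both want the resource in question moves it across a utility gap of exactly $1$, so it is mean-preserving but not inequality-reducing, hence not a Pigou-Dalton transfer. Yet $A$, with ordered utility vector $\langle 0,1,2\rangle$, is Lorenz dominated by $A'(1)=\{c\}$, $A'(2)=\{b\}$, $A'(3)=\{a\}$ with vector $\langle 1,1,1\rangle$. This is exactly the gap-one chain you were worried about, and no single edge of the chain straddles a gap of~$2$.

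For comparison, the paper's own proof closes this case by taking $k$ to be the first index at which $\vec{u}_k(A)<\vec{u}_k(A')$, calling the agents contributing the first $k$ entries of $\vec{u}(A)$ ``poor'' and the rest ``rich'', and arguing by counting that some resource wanted by both must be held by a rich agent in $A$ and a poor agent in $A'$; it then asserts that moving that resource is a Pigou-Dalton transfer. That assertion is precisely the point your proposal declines to take on faith: the rich agent's utility may exceed the poor agent's by only $1$ (in the example above $k=1$, the poor agent is agent~1, and $c$ flows from agent~2, whose utility is $1$), in which case the transfer does not reduce inequality. So your diagnosis of where the difficulty lies is accurate, but neither your sketched extraction nor the published counting argument actually secures the required gap of $2$; closing this hole requires either a stronger class of admissible deals or an additional hypothesis, not just a cleverer path argument.
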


\begin{proof}
As pointed out earlier, any deal that is either cooperatively rational or a Pigou-Dalton
transfer will result in a Lorenz improvement (not only in the case of 0-1 scenarios).
Hence, given that there are only a finite number of distinct allocations of resources, 
after a finite number of deals the system will have reached an allocation where no more 
simple Pareto-Pigou-Dalton deals are possible; that is, negotiation must terminate.
Now, for the sake of contradiction, let us assume this
terminal allocation $A$ is not optimal, \ie\
there exists another allocation $A'$ that Lorenz dominates $A$.
Amongst other things, this implies $sw_u(A)\leq sw_u(A')$, \ie\
we can distinguish two cases: either $(i)$~there has been a strict increase in
utilitarian welfare, or $(ii)$~it has remained  constant. In 0-1 scenarios, the former
is only possible if there are (at least) one resource $r\in{\cal R}$ and two
agents $i, j\in{\cal A}$ such that $u_i(\{r\})=0$ and $u_j(\{r\})=1$ as well as
$r\in A(i)$ and $r\in A'(j)$, \ie\
$r$ has been moved from agent $i$ (who does not need it) to agent $j$ (who
does need it). But then the 1-deal of moving only $r$ from 
$i$ to $j$ would be cooperatively rational and hence also a simple Pareto-Pigou-Dalton 
deal. This contradicts our assumption of $A$ being a terminal allocation.

Now let us assume that utilitarian social welfare remained constant, \ie\
$sw_u(A)=sw_u(A')$. Let $k$ be the smallest index such that
$\vec{u}_k(A)<\vec{u}_k(A')$. (This is the first $k$ for which the inequality
in Definition~\ref{def:lorenz} is strict.)
Observe that we cannot have $k=|{\cal A}|$, as this would contradict
$sw_u(A)=sw_u(A')$. We shall call the agents contributing the first $k$ entries
in the ordered utility vector $\vec{u}(A)$ the \emph{poor} agents and the
remaining ones the \emph{rich} agents.
Then, in a 0-1 scenario, there must be a resource $r\in{\cal R}$ that is
owned by a rich agent $i$ in allocation $A$ and by a poor agent $j$ in allocation $A'$
and that is needed by both these agents, \ie\
$u_i(\{r\})=1$ and $u_j(\{r\})=1$. But then moving this resource from agent~$i$ to agent~$j$
would constitute a Pigou-Dalton transfer (and hence also a simple Pareto-Pigou-Dalton deal)
in allocation $A$, which again contradicts our earlier assumption of $A$ being terminal.
\end{proof}

\afterproof
In summary, we have shown that $(i)$~any allocation of resources from which no simple
Pareto-Pigou-Dalton deals are possible must be a Lorenz optimal allocation and $(ii)$~that
such an allocation will always be reached by implementing a finite number of simple
Pareto-Pigou-Dalton deals. As with our earlier convergence results, agents do not
need to worry about which deals to implement, as long as they are simple Pareto-Pigou-Dalton
deals. The convergence to a global optimum is guaranteed by the theorem.

\section{Further Variations}
\label{sec:further}

In this section, we are going to briefly consider two further notions of social preference
and discuss them in the context of our framework of resource allocation by negotiation. 
Firstly, we are going to introduce the idea of \emph{elitist} agent societies, where social welfare 
is tied to the welfare of the agent that is currently best off. Then we are going to discuss 
societies where \emph{envy-free} allocations of resources are desirable.

\subsection{Elitist Agent Societies}
\label{sec:elitist}

Earlier we have discussed the maximin-ordering 
induced by the  egalitarian collective utility function $sw_e$ (see Definition~\ref{def:swe}). 
This ordering is actually a particular case of a class of social welfare orderings,
sometimes called \emph{k-rank dictators} \cite{Moulin1988a}, where a particular agent 
(the one corresponding to the $k$th element in the ordered utility vector) is chosen
to be the representative of society. 
Amongst this class of orderings, another particularly interesting case is where 
the welfare of society is evaluated on the basis of the happiest agent (as opposed 
to the unhappiest agent, as in the case for egalitarian welfare). We call this the
\emph{elitist} approach to measuring social welfare.

\begin{definition}[Elitist social welfare]
The elitist social welfare $sw_{el}(A)$ of an allocation of resources $A$ is
defined as follows:
\[\begin{array}{rcl}
sw_{el}(A) & = & \max\{u_i(A)\;|\;i\in{\cal A}\}
\end{array}\]
\end{definition}
In an elitist agent society, agents would cooperate in order
to support their champion (the currently happiest agent).
While such an approach 
may seem somewhat unethical as far as human society is concerned,
we believe that it could indeed be very appropriate for certain societies of
artificial agents. For some applications, a distributed multiagent system 
may merely serve as a means for helping a single agent in that system to achieve 
its goal. However, it may not always be known in advance which agent is most 
likely to achieve its goal and should therefore be supported by its peers. 
A typical scenario could be where a system designer launches
different agents with the same goal, with the aim that \emph{at least one}
agent achieves that goal ---no matter what happens to the others.
As with egalitarian agent societies, this does not contradict the idea
of agents being \emph{autonomous} entities. Agents may be physically distributed
and make their own autonomous decisions on a variety of issues whilst also adhering
to certain social principles, in this case elitist ones.

From a technical point of view, designing a criterion that would allow
agents inhabiting an elitist agent society to decide locally whether or
not to accept a particular deal is very similar to the egalitarian case. In analogy
to the case of equitable deals defined earlier, a suitable deal would have to
increase the maximal individual welfare amongst the agents involved in any one deal.
As for the egalitarian case, there can be no class of deals characterised by a
local rationality criterion that would \emph{exactly} capture the range of deals
resulting in an increase in elitist social welfare (because every agent in the system would
have to be consulted first to determine who is currently best off). To prove convergence, 
we would have to resort to an auxiliary social welfare ordering (similarly to the
use of the leximin-ordering in the proof of Theorem~\ref{thm:eswmax}).

Of course, in many cases there is a much simpler way of finding
an allocation with maximal elitist social welfare. For instance, if all agents
use monotonic utility functions, then moving all resources to the agent
assigning the highest utility value to the full bundle $\cal R$ would 
be optimal from an elitist point of view. More generally, we can 
always find an elitist optimum by checking whose utility function 
has got the highest peak.
That is, while the highest possible elitist social welfare can easily be 
determined in a centralised manner, our distributed approach can 
still provide a useful framework for studying the process of 
actually reaching such an optimal allocation.

\subsection{Reducing Envy amongst Agents}
\label{sec:envy}

Our final example for an interesting approach to measuring social welfare in
an agent society is the issue of \emph{envy-freeness} \cite{BramsTaylor1996a}.
For a particular allocation of resources, an agent may be ``envious'' of another
agent if it would prefer that agent's set of resources over its own.
Ideally, an allocation should be envy-free.
\begin{definition}[Envy-freeness]
An allocation of resources $A$ is called envy-free iff we have
$u_i(A(i))\geq u_i(A(j))$ for all agents $i, j\in{\cal A}$.
\end{definition}
Like egalitarian social welfare, this is related to the fair division of resources amongst agents.
Envy-freeness is desirable (though not always achievable) in societies of
self-interested agents in cases where agents have to collaborate with each other
over a longer period of time. In such a case, should an agent believe
that it has been ripped off, it would have an incentive to leave the coalition
which may be disadvantageous for other agents or the society as a whole.
In other words, envy-freeness plays an important role with respect to the
stability of a group. Unfortunately, envy-free allocations do not always exist.
A simple example would be a system with two agents and just a single resource,
which is valued by both of them. Then whichever agent holds that single resource
will be envied by the other agent.

Furthermore,
aiming at agreeing on an envy-free allocation of resources is not always compatible 
with, say, negotiating Pareto optimal outcomes. Consider the following example of 
two agents with identical preferences over alternative bundles of resources:

\begin{center}
\begin{tabular}{ll} \hline
$u_1(\emptyset) = 0$ & $u_2(\emptyset) = 0$ \\
$u_1(\{r_1\}) = 1$ & $u_2(\{r_1\}) = 1$ \\ 
$u_1(\{r_2\}) = 2$ & $u_2(\{r_2\}) = 2$ \\ 
$u_1(\{r_1,r_2\}) = 0$ & $u_2(\{r_1,r_2\}) = 0$ \\ \hline 
\end{tabular}
\end{center}
For this example, either one of the two allocations where one agent owns all
resources and the other none would be envy-free (as no agent would prefer
the other one's bundle over its own). However, such an allocation would not
be Pareto optimal. On the other hand, an allocation where each agent owns
a single resource would be Pareto optimal, but not envy-free (because the 
agent holding $r_1$ would rather have $r_2$).

We should stress that envy is defined on the 
sole basis of an agent's private preferences, \ie\ 
there is no need to take other
agents' utility functions into account. Still, whether an agent is envious
does not just depend on the resources it holds, but also on the resources it
\emph{could} hold and whether any of the other agents currently hold a preferred bundle.
This somewhat paradoxical situation makes envy-freeness far less amenable to our 
methodology than any of the other notions of social welfare we have discussed in 
this paper.

To be able to measure different \emph{degrees} of envy, we could, for example,
count the number of agents that are envious for a given allocation.
Another option would be to compute for each agent $i$ that experiences any envy at all
the difference between $u_i(A(i))$ and $u_i(A(j))$ for the agent $j$ that $i$ envies 
the most. Then the sum over all these differences would also provide an indication
of the degree of overall envy (and thereby of social welfare).
In the spirit of egalitarianism, a third option would be identify the degree of envy 
in society with the degree of envy experienced by the agent that is the most 
envious \cite{LiptonEtAlEC2004}.
However, it is not possible to define a local acceptability criterion
in terms of the utility functions of the agents involved in a deal (and only those)
that indicates whether the deal in question would reduce envy according to any such
a metric. This is a simple consequence of the fact that a deal may affect the degree
of envy experienced by an agent not involved in the deal at all (because it could
lead to one of the participating agents ending up with a bundle preferred by the
non-concerned agent in question).

\section{Conclusion}
\label{sec:conclusion}

We have studied an abstract negotiation framework where members of an agent society arrange 
multilateral deals to exchange bundles of indivisible resources, and we have analysed how the 
resulting changes in resource distribution affect society  with respect to different 
social welfare orderings.
 
For scenarios where agents act rationally in the sense of never accepting a deal 
that would (even temporarily) decrease their level of welfare, we have seen that 
systems where side payments are possible can guarantee outcomes with maximal 
utilitarian social welfare, while systems without side payments allow, at least, 
for the negotiation of Pareto optimal allocations. We have also considered two
examples of special domains with restricted utility functions, namely
additive and 0-1 scenarios. In both cases, we have been able to
prove the convergence to a socially optimal allocation of resources also
for negotiation protocols that allow only for deals involving only 
a single resources and a pair of agents each (so-called 1-deals).
In the case of agent societies where welfare is measured in terms of the egalitarian
collective utility function, we have put forward the class of equitable deals and
shown that negotiation processes where agents use equitability as an acceptability 
criterion will also converge towards an optimal state. Another result states that, 
for the relatively simple 0-1 scenarios, Lorenz optimal allocations can be achieved 
using one-to-one negotiation by implementing 1-deals that are either inequality-reducing 
or that increase the welfare of both agents involved.
We have also discussed the case of elitist agent societies where social welfare is
tied to the welfare of the most successful agent. And finally, we have pointed 
out some of the difficulties associated with designing agents that would be able
to negotiate allocations of resources where the degree of envy between the agents
in a society is minimal.
Specifically, we have proved the following technical results:
\begin{itemize}
\item The class of \emph{individually rational} deals\footnote{%
Recall that individually rational deals 
may include monetary side payments.}
is sufficient to negotiate allocations with \emph{maximal utilitarian
social welfare} (Theorem~\ref{thm:sufficientwithmoney}).
\item In domains with \emph{additive utility functions}, the class of
\emph{individually rational 1-deals} is sufficient
to negotiate  allocations with \emph{maximal utilitarian
social welfare} (Theorem~\ref{thm:additive}).
\item  The class of \emph{cooperatively rational} deals is
sufficient to negotiate \emph{Pareto optimal} allocations 
(Theorem~\ref{thm:sufficientwithoutmoney}).
\item In domains with \emph{0-1 utility functions}, the class of
\emph{cooperatively rational 1-deals} is sufficient
to negotiate  allocations with \emph{maximal utilitarian
social welfare} (Theorem~\ref{thm:zeroone}).
\item The class of \emph{equitable} deals is sufficient to negotiate
allocations with \emph{maximal egalitarian social welfare}
(Theorem~\ref{thm:eswmax}).
\item In domains with \emph{0-1 utility functions}, the class of 
\emph{simple Pareto-Pigou-Dalton} deals (which are  1-deals) 
is sufficient to negotiate \emph{Lorenz optimal} allocations 
(Theorem~\ref{thm:lorenz}).
\end{itemize}

\noindent
For each of the three convergence results that apply to deals without 
structural restrictions (rather than to 1-deals),
we have also proved corresponding \emph{necessity results}
(Theorems~\ref{thm:necessarywithmoney}, \ref{thm:necessarywithoutmoney}, 
and~\ref{thm:egalitariannecessity}). These theorems show that any given 
deal (defined as a pair of allocations) that is not independently decomposable may be necessary 
to be able to negotiate an optimal allocation of resources (with respect to the chosen
notion of social welfare), if deals are required to conform to the rationality 
criterion in question. As a consequence of these results, no negotiation
protocol that does not allow for the representation of deals involving any 
number of agents and any number of resources could ever enable agents (whose 
behaviour is constrained by our various rationality criteria) to negotiate a
socially optimal allocation in all cases. Rather surprisingly, all three necessity
results continue to apply even when agents can only differentiate between
resource bundles that they would be happy with and those they they would 
not be happy with (using \emph{dichotomous} utility functions). 
Theorems~\ref{thm:necessarywithmoney} and~\ref{thm:necessarywithoutmoney}
also apply in case all agents are required to use \emph{monotonic} utility functions.

A natural question that arises when considering our convergence results concerns 
the complexity of the negotiation framework. How difficult is it for agents
to agree on a deal and how many deals are required before a system converges to an
optimal state? The latter of these questions has recently been addressed
by \citeA{EndrissMaudetJAAMAS2005}. The paper establishes upper bounds on the number of deals 
required to reach any of the optimal allocations of resources referred to 
in the four convergence theorems for the model of rational negotiation (\ie\
Theorems~\ref{thm:sufficientwithmoney}, \ref{thm:additive}, \ref{thm:sufficientwithoutmoney},
and~\ref{thm:zeroone}). It also discusses the different aspects 
of complexity involved at a more general level (such as the distinction between
the \emph{communication complexity} of the system, \ie\
the amount of information that agents need to exchange to reach 
an optimal allocation, and the \emph{computational complexity} 
of the reasoning tasks faced by every single agent).
\citeA{DunneJAIR2005} addresses a related problem and studies the number of deals 
meeting certain structural requirements (in particular 1-deals) 
that are required to reach a given target allocation (whenever this is possible at
all ---recall that our necessity results show that excluding certain deal patterns 
will typically bar agents from reaching optimal allocations).

In earlier work, \citeA{DunneEtAlAIJ} have studied the complexity of deciding 
whether one-resource-at-a-time trading with side payments is sufficient to reach a
given allocation (with improved utilitarian social welfare). 
This problem has been shown to be NP-hard. Other complexity results concern
the computational complexity of finding a socially optimal allocation, 
independently from the concrete negotiation mechanism used. As mentioned earlier,
such results are closely related to the computational complexity of the winner determination
problem in combinatorial auctions \cite{RothkopfEtAlMS1998,CramtonEtAl2006}. 
Recently, NP-completeness results for this optimisation problem have been derived 
with respect to several different ways of representing utility 
functions \cite{DunneEtAlAIJ,ChevaleyreEtAlCSDT2004}.
\citeA{BouveretLangIJCAI2005} also address the computational complexity
of deciding whether an allocation exists that is both envy-free and Pareto optimal.

Besides presenting technical results,
we have argued that a wide spectrum of social welfare orderings (rather than just
those induced by the well-known utilitarian collective welfare function and the 
concept of Pareto optimality) can be of interest to agent-based applications.
In the context of a typical electronic commerce application, where participating
agents have no responsibilities towards each other, a system designer may wish to ensure
Pareto optimality to guarantee that agents get maximal payoff whenever this is possible 
without making any of the other agents worse off. 
In applications where a \emph{fair} treatment of all participants is vital (\eg\
cases where the system infrastructure is jointly owned by all the agents), 
an egalitarian approach to measuring social welfare may be more appropriate.
Many applications are in fact likely to warrant a mixture of utilitarian and
egalitarian principles. Here, systems that enable Lorenz optimal agreements
may turn out to be the technology of choice. Other applications, however, may 
require social welfare to be measured in ways not foreseen by the models typically 
studied in the social sciences. Our proposed notion of elitist welfare
would be such an example. Elitism has little room in human society, where ethical 
considerations are paramount, but for a particular computing application 
these considerations may well be dropped or changed. 

This discussion suggests an approach to multiagent systems design that we
call \emph{welfare engineering} \cite{EndrissMaudetESAW2003}. 
It involves, firstly, the application-driven 
choice (or possibly invention) of a suitable social welfare ordering and,
secondly, the design of agent behaviour profiles and negotiation mechanisms 
that permit (or even guarantee) socially optimal outcomes of interactions 
between the agents in a system. As discussed earlier, designing agent behaviour
profiles does not necessarily contradict the idea of the autonomy of an agent,
because autonomy always has to be understood as being relative to the norms 
governing the society in which the agent operates. 
We should stress that, while we have been studying a \emph{distributed} 
approach to multiagent resource allocation in this paper, the general
idea of exploring the full range of social welfare orderings when 
developing agent-based applications also applies to centralised mechanisms 
(such as combinatorial auctions).

We hope to develop this methodology of welfare 
engineering further in our future work.
Other possible directions of future work include the
identification of further social welfare orderings and the definition 
of corresponding deal acceptability criteria; the continuation of
the complexity-theoretic analysis of our negotiation framework;
and the design of practical trading mechanisms (including both 
protocols and strategies) that would allow agents to agree on 
multilateral deals involving more than just two agents at a time.


\acks{%
We would like to thank J\'{e}r\^{o}me Lang and 
various anonymous referees for their valuable comments. 
This research has been partially supported by the European Commission 
as part of the SOCS project (IST-2001-32530).}


\bibliographystyle{theapa}
\bibliography{deals}

\begin{thebibliography}{}

\bibitem[\protect\BCAY{Andersson\ \BBA\ Sandholm}{Andersson\ \BBA\
  Sandholm}{2000}]{AnderssonSandholmICDCS2000}
Andersson, M.\BBACOMMA\  \BBA\ Sandholm, T.~W. \BBOP2000\BBCP.
\newblock \BBOQ Contract type sequencing for reallocative negotiation\BBCQ\
\newblock In {\Bem Proceedings of the 20th International Conference on
  Distributed Computing Systems (ICDCS-2000)}, \BPGS\ 154--160. IEEE.

\bibitem[\protect\BCAY{Arrow, Sen,\ \BBA\ Suzumura}{Arrow
  et~al.}{2002}]{social-choice-handbook}
Arrow, K.~J., Sen, A.~K., \BBA\ Suzumura, K.\BEDS. \BBOP2002\BBCP.
\newblock {\Bem Handbook of Social Choice and Welfare}, \lowercase{\BVOL}~1.
\newblock North-Holland.

\bibitem[\protect\BCAY{Bouveret\ \BBA\ Lang}{Bouveret\ \BBA\
  Lang}{2005}]{BouveretLangIJCAI2005}
Bouveret, S.\BBACOMMA\  \BBA\ Lang, J. \BBOP2005\BBCP.
\newblock \BBOQ Efficiency and envy-freeness in fair division of indivisible
  goods: {L}ogical representation and complexity\BBCQ\
\newblock In {\Bem Proceedings of the 19th International Joint Conference on
  Artificial Intelligence (IJCAI-2005)}, \BPGS\ 935--940. Morgan Kaufmann
  Publishers.

\bibitem[\protect\BCAY{Brams\ \BBA\ Taylor}{Brams\ \BBA\
  Taylor}{1996}]{BramsTaylor1996a}
Brams, S.~J.\BBACOMMA\  \BBA\ Taylor, A.~D. \BBOP1996\BBCP.
\newblock {\Bem Fair Division: {F}rom Cake-cutting to Dispute Resolution}.
\newblock Cambridge University Press.

\bibitem[\protect\BCAY{Chavez, Moukas,\ \BBA\ Maes}{Chavez
  et~al.}{1997}]{ChavezEtAl1997}
Chavez, A., Moukas, A., \BBA\ Maes, P. \BBOP1997\BBCP.
\newblock \BBOQ Challenger: {A} multi-agent system for distributed resource
  allocation\BBCQ\
\newblock In {\Bem Proceedings of the 1st International Conference on
  Autonomous Agents (Agents-1997)}, \BPGS\ 323--331. ACM Press.

\bibitem[\protect\BCAY{Chevaleyre, Endriss, Estivie,\ \BBA\ Maudet}{Chevaleyre
  et~al.}{2004}]{ChevaleyreEtAlCSDT2004}
Chevaleyre, Y., Endriss, U., Estivie, S., \BBA\ Maudet, N. \BBOP2004\BBCP.
\newblock \BBOQ Multiagent resource allocation with $k$-additive utility
  functions\BBCQ\
\newblock In {\Bem Proceedings of the DIMACS-LAMSADE Workshop on Computer
  Science and Decision Theory}, \lowercase{\BVOL}~3 of {\Bem Annales du
  LAMSADE}, \BPGS\ 83--100.

\bibitem[\protect\BCAY{Chevaleyre, Endriss, Lang,\ \BBA\ Maudet}{Chevaleyre
  et~al.}{2005a}]{ChevaleyreEtAlAAMAS2005}
Chevaleyre, Y., Endriss, U., Lang, J., \BBA\ Maudet, N. \BBOP2005a\BBCP.
\newblock \BBOQ Negotiating over small bundles of resources\BBCQ\
\newblock In {\Bem Proceedings of the 4th International Joint Conference on
  Autonomous Agents and Multiagent Systems (AAMAS-2005)}, \BPGS\ 296--302. ACM
  Press.

\bibitem[\protect\BCAY{Chevaleyre, Endriss,\ \BBA\ Maudet}{Chevaleyre
  et~al.}{2005b}]{ChevaleyreEtAlIJCAI2005}
Chevaleyre, Y., Endriss, U., \BBA\ Maudet, N. \BBOP2005b\BBCP.
\newblock \BBOQ On maximal classes of utility functions for efficient
  one-to-one negotiation\BBCQ\
\newblock In {\Bem Proceedings of the 19th International Joint Conference on
  Artificial Intelligence (IJCAI-2005)}, \BPGS\ 941--946. Morgan Kaufmann
  Publishers.

\bibitem[\protect\BCAY{Cramton, Shoham,\ \BBA\ Steinberg}{Cramton
  et~al.}{2006}]{CramtonEtAl2006}
Cramton, P., Shoham, Y., \BBA\ Steinberg, R.\BEDS. \BBOP2006\BBCP.
\newblock {\Bem Combinatorial Auctions}.
\newblock MIT Press.

\bibitem[\protect\BCAY{Dunne}{Dunne}{2005}]{DunneJAIR2005}
Dunne, P.~E. \BBOP2005\BBCP.
\newblock \BBOQ Extremal behaviour in multiagent contract negotiation\BBCQ\
\newblock {\Bem Journal of Artificial Intelligence Research}, {\Bem 23},
  41--78.

\bibitem[\protect\BCAY{Dunne, Wooldridge,\ \BBA\ Laurence}{Dunne
  et~al.}{2005}]{DunneEtAlAIJ}
Dunne, P.~E., Wooldridge, M., \BBA\ Laurence, M. \BBOP2005\BBCP.
\newblock \BBOQ The complexity of contract negotiation\BBCQ\
\newblock {\Bem Artificial Intelligence}, {\Bem 164\/}(1--2), 23--46.

\bibitem[\protect\BCAY{Dunne, Laurence,\ \BBA\ Wooldridge}{Dunne
  et~al.}{2004}]{DunneEtAlECAI2004}
Dunne, P.~E., Laurence, M., \BBA\ Wooldridge, M. \BBOP2004\BBCP.
\newblock \BBOQ Tractability results for automatic contracting\BBCQ\
\newblock In {\Bem Proceedings of the 16th Eureopean Conference on Artificial
  Intelligence (ECAI-2004)}, \BPGS\ 1003--1004. IOS Press.

\bibitem[\protect\BCAY{Endriss\ \BBA\ Maudet}{Endriss\ \BBA\
  Maudet}{2004}]{EndrissMaudetESAW2003}
Endriss, U.\BBACOMMA\  \BBA\ Maudet, N. \BBOP2004\BBCP.
\newblock \BBOQ Welfare engineering in multiagent systems\BBCQ\
\newblock In {\Bem Engineering Societies in the Agents World IV},
  \lowercase{\BVOL}\ 3071 of {\Bem LNAI}, \BPGS\ 93--106. Springer-Verlag.

\bibitem[\protect\BCAY{Endriss\ \BBA\ Maudet}{Endriss\ \BBA\
  Maudet}{2005}]{EndrissMaudetJAAMAS2005}
Endriss, U.\BBACOMMA\  \BBA\ Maudet, N. \BBOP2005\BBCP.
\newblock \BBOQ On the communication complexity of multilateral trading:
  {E}xtended report\BBCQ\
\newblock {\Bem Journal of Autonomous Agents and Multiagent Systems}, {\Bem
  11\/}(1), 91--107.

\bibitem[\protect\BCAY{Endriss, Maudet, Sadri,\ \BBA\ Toni}{Endriss
  et~al.}{2003a}]{EndrissEtAlAAMAS2003-optimal}
Endriss, U., Maudet, N., Sadri, F., \BBA\ Toni, F. \BBOP2003a\BBCP.
\newblock \BBOQ On optimal outcomes of negotiations over resources\BBCQ\
\newblock In {\Bem Proceedings of the 2nd International Joint Conference on
  Autonomous Agents and Multiagent Systems (AAMAS-2003)}, \BPGS\ 177--184. ACM
  Press.

\bibitem[\protect\BCAY{Endriss, Maudet, Sadri,\ \BBA\ Toni}{Endriss
  et~al.}{2003b}]{EndrissEtAlMFI2003}
Endriss, U., Maudet, N., Sadri, F., \BBA\ Toni, F. \BBOP2003b\BBCP.
\newblock \BBOQ Resource allocation in egalitarian agent societies\BBCQ\
\newblock In {\Bem Secondes Journ\'ees Francophones sur les Mod\`eles Formels
  d'Interaction (MFI-2003)}, \BPGS\ 101--110. C\'epadu\`es-\'Editions.

\bibitem[\protect\BCAY{Fatima, Wooldridge,\ \BBA\ Jennings}{Fatima
  et~al.}{2004}]{FatimaEtAlAIJ2004}
Fatima, S.~S., Wooldridge, M., \BBA\ Jennings, N.~R. \BBOP2004\BBCP.
\newblock \BBOQ An agenda-based framework for multi-issues negotiation\BBCQ\
\newblock {\Bem Artificial Intelligence}, {\Bem 152\/}(1), 1--45.

\bibitem[\protect\BCAY{Fishburn}{Fishburn}{1970}]{Fishburn1970}
Fishburn, P.~C. \BBOP1970\BBCP.
\newblock {\Bem Utility Theory for Decision Making}.
\newblock John Wiley and Sons.

\bibitem[\protect\BCAY{Fujishima, Leyton-Brown,\ \BBA\ Shoham}{Fujishima
  et~al.}{1999}]{FujishimaEtAlIJCAI1999}
Fujishima, Y., Leyton-Brown, K., \BBA\ Shoham, Y. \BBOP1999\BBCP.
\newblock \BBOQ Taming the computational complexity of combinatorial auctions:
  {O}ptimal and approximate approaches\BBCQ\
\newblock In {\Bem Proceedings of the 16th International Joint Conference on
  Artificial Intelligence (IJCAI-1999)}, \BPGS\ 548--553. Morgan Kaufmann
  Publishers.

\bibitem[\protect\BCAY{Kraus}{Kraus}{2001}]{Kraus2001a}
Kraus, S. \BBOP2001\BBCP.
\newblock {\Bem Strategic Negotiation in Multiagent Environments}.
\newblock MIT Press.

\bibitem[\protect\BCAY{Lema\^{\i}tre, Verfaillie,\ \BBA\
  Bataille}{Lema\^{\i}tre et~al.}{1999}]{LemaitreEtAlIJCAI1999}
Lema\^{\i}tre, M., Verfaillie, G., \BBA\ Bataille, N. \BBOP1999\BBCP.
\newblock \BBOQ Exploiting a common property resource under a fairness
  constraint: {A} case study\BBCQ\
\newblock In {\Bem Proceedings of the 16th International Joint Conference on
  Artificial Intelligence (IJCAI-1999)}, \BPGS\ 206--211. Morgan Kaufmann
  Publishers.

\bibitem[\protect\BCAY{Lipton, Markakis, Mossel,\ \BBA\ Saberi}{Lipton
  et~al.}{2004}]{LiptonEtAlEC2004}
Lipton, R.~J., Markakis, E., Mossel, E., \BBA\ Saberi, A. \BBOP2004\BBCP.
\newblock \BBOQ On approximately fair allocations of indivisible goods\BBCQ\
\newblock In {\Bem Proceedings of the 5th ACM Conference on Electronic Commerce
  (EC-2004)}, \BPGS\ 125--131. ACM Press.

\bibitem[\protect\BCAY{Moulin}{Moulin}{1988}]{Moulin1988a}
Moulin, H. \BBOP1988\BBCP.
\newblock {\Bem Axioms of Cooperative Decision Making}.
\newblock Cambridge University Press.

\bibitem[\protect\BCAY{Myerson\ \BBA\ Satterthwaite}{Myerson\ \BBA\
  Satterthwaite}{1983}]{MyersonSatterthwaite1983}
Myerson, R.~B.\BBACOMMA\  \BBA\ Satterthwaite, M.~A. \BBOP1983\BBCP.
\newblock \BBOQ Efficient mechanisms for bilateral trading\BBCQ\
\newblock {\Bem Journal of Economic Theory}, {\Bem 29\/}(2), 265--281.

\bibitem[\protect\BCAY{Rawls}{Rawls}{1971}]{Rawls1971a}
Rawls, J. \BBOP1971\BBCP.
\newblock {\Bem A Theory of Justice}.
\newblock Oxford University Press.

\bibitem[\protect\BCAY{Rosenschein\ \BBA\ Zlotkin}{Rosenschein\ \BBA\
  Zlotkin}{1994}]{RosenscheinZlotkin1994a}
Rosenschein, J.~S.\BBACOMMA\  \BBA\ Zlotkin, G. \BBOP1994\BBCP.
\newblock {\Bem Rules of Encounter}.
\newblock MIT Press.

\bibitem[\protect\BCAY{Rothkopf, Peke\u{c},\ \BBA\ Harstad}{Rothkopf
  et~al.}{1998}]{RothkopfEtAlMS1998}
Rothkopf, M.~H., Peke\u{c}, A., \BBA\ Harstad, R.~M. \BBOP1998\BBCP.
\newblock \BBOQ Computationally manageable combinational auctions\BBCQ\
\newblock {\Bem Management Science}, {\Bem 44\/}(8), 1131--1147.

\bibitem[\protect\BCAY{Sadri, Toni,\ \BBA\ Torroni}{Sadri
  et~al.}{2001}]{SadriToniTorroniATAL2001}
Sadri, F., Toni, F., \BBA\ Torroni, P. \BBOP2001\BBCP.
\newblock \BBOQ Dialogues for negotiation: {A}gent varieties and dialogue
  sequences\BBCQ\
\newblock In {\Bem Proceedings of the 8th International Workshop on Agent
  Theories, Architectures, and Languages (ATAL-2001)}, \BPGS\ 405--421.
  Springer-Verlag.

\bibitem[\protect\BCAY{Sandholm}{Sandholm}{2002}]{SandholmAIJ2002}
Sandholm, T.~W. \BBOP2002\BBCP.
\newblock \BBOQ Algorithm for optimal winner determination in combinatorial au
  ctions\BBCQ\
\newblock {\Bem Artificial Intelligence}, {\Bem 135}, 1--54.

\bibitem[\protect\BCAY{Sandholm}{Sandholm}{1998}]{Sandholm1998a}
Sandholm, T.~W. \BBOP1998\BBCP.
\newblock \BBOQ Contract types for satisficing task allocation: {I}
  {T}heoretical results\BBCQ\
\newblock In {\Bem Proceedings of the AAAI Spring Symposium: Satisficing
  Models}.

\bibitem[\protect\BCAY{Sandholm}{Sandholm}{1999}]{SandholmChapter1999}
Sandholm, T.~W. \BBOP1999\BBCP.
\newblock \BBOQ Distributed rational decision making\BBCQ\
\newblock In Wei\ss, G.\BED, {\Bem Multiagent Systems: {A} Modern Approach to
  Distributed Artificial Intelligence}, \BPGS\ 201--258. MIT Press.

\bibitem[\protect\BCAY{Sen}{Sen}{1970}]{Sen1970}
Sen, A.~K. \BBOP1970\BBCP.
\newblock {\Bem Collective Choice and Social Welfare}.
\newblock Holden Day.

\bibitem[\protect\BCAY{Smith}{Smith}{1980}]{Smith1980a}
Smith, R.~G. \BBOP1980\BBCP.
\newblock \BBOQ The contract net protocol: {H}igh-level communication and
  control in a distributed problem solver\BBCQ\
\newblock {\Bem IEEE Transactions on Computers}, {\Bem C-29\/}(12), 1104--1113.

\bibitem[\protect\BCAY{Wooldridge}{Wooldridge}{2002}]{Wooldridge2002a}
Wooldridge, M. \BBOP2002\BBCP.
\newblock {\Bem An Introduction to MultiAgent Systems}.
\newblock John Wiley and Sons.

\end{thebibliography}

\end{document}